\documentclass[11pt,a4paper]{article}
\usepackage[margin=1in]{geometry}
\usepackage{amsmath,amssymb,amsfonts,amsthm,mathtools}
\usepackage{graphicx}
\usepackage{booktabs}
\usepackage{float}
\usepackage{hyperref}
\usepackage{enumitem}
\usepackage{algorithm}
\usepackage{algpseudocode}
\usepackage{algorithmicx}
\usepackage{setspace}
\usepackage{natbib}
\usepackage{caption}
\usepackage{subcaption}
\usepackage{xcolor}

\newtheorem{definition}{Definition}[section]

\newtheorem{lemma}{Lemma}[section]
\newtheorem{proposition}{Proposition}[section]
\newtheorem{theorem}{Theorem}[section]
\newtheorem{corollary}{Corollary}[section]
\newtheorem{remark}{Remark}[section]

\begin{document}

\begin{center}
{\Large\textbf{Pricing Temperature-Index Insurance under Long Memory and Stochastic Time Change}}

\vspace{1em}
N. Karimi$^{a}$, \quad F. Shokrollahi$^{b,*}$

\vspace{0.5em}
{\small
\emph{\footnotesize $^{a}$ Department of Applied Mathematics, Faculty of Mathematics and Computer Science,} \\
\emph{\footnotesize Amirkabir University of Technology, No. 424, Hafez Ave., 15914, Tehran, Iran}\\
$^{b}$ Department of Mathematics and Statistics, University of Vaasa,
Vaasa, Finland\\[0.3em]
$^{*}$ Corresponding author. Email: foad.shokrollahi@uwasa.fi
}
\end{center}

\begin{abstract}
This paper develops a unit-consistent actuarial framework for pricing capped
cumulative temperature-index insurance under long-range dependence and
stochastic variability. Daily temperature anomalies are modeled as increments
of fractional Brownian motion evaluated at an operational time generated by
the integral of a stationary normalized Cox--Ingersoll--Ross process. We show
that the stochastic time change preserves stationarity and the long-memory
covariance decay of the increments while introducing additional variability
through the random operational clock. The cumulative temperature index admits
a conditionally Gaussian representation, which leads to an exact conditional
exponential kernel for capped stop-loss contracts and ensures existence of the
entropic premium for every positive risk-aversion parameter. Consequently,
valuation reduces to an outer Monte Carlo expectation over the accumulated
CIR time, avoiding fractional Brownian path simulation and covariance-matrix
construction. We further establish monotonicity properties of the premium with
respect to risk aversion and conditional volatility. An empirical illustration
based on Chicago temperature data shows that both long memory and stochastic
time change can materially affect insurance premiums relative to conventional
Brownian and fractional Brownian benchmarks, with the Hurst parameter playing
an important role in valuation uncertainty. The proposed framework therefore
provides a tractable approach for incorporating persistent dependence,
stochastic variability, and bounded insurance losses into climate-index
pricing.
\end{abstract}
\noindent\textbf{Keywords:} Climate-index insurance; temperature index; actuarial pricing;
time-changed fractional Brownian motion; Cox--Ingersoll--Ross process; entropic
premium; long-range dependence.

\section{Introduction}\label{sec:intro}

Climate-related indices have become increasingly important instruments for
measuring, transferring, and managing exposures to temperature, precipitation,
drought, wind, and compound extremes \citep{Pan2022,Nuntasuwan2026,Yavrum2026}.
Among these instruments, temperature-index insurance is particularly attractive
when economic losses are associated with persistent departures from normal
weather conditions over a coverage period rather than with a single extreme
observation. The contractual quantity is then naturally cumulative: a sequence
of moderately adverse temperature anomalies may generate a material insured
loss even when no individual day appears exceptional. This feature makes the
dependence structure and maturity scaling of the underlying temperature process
central to actuarial valuation.

Unlike indemnity insurance, index insurance determines payment from a
pre-specified and objectively observed index rather than from an individual loss
assessment. This design can reduce claims-adjustment costs, moral hazard, and
adverse selection, while facilitating transparent and rapid settlement
\citep{Barnett2007,MirandaFarrin2012,Jensen2017,Carter2017}. Its principal
limitation is basis risk: the index realization may differ from the actual
economic loss experienced by the policyholder. Recent reviews therefore regard
index construction, contract calibration, pricing, and validation against
realized losses as closely related components of a credible climate-risk
transfer mechanism \citep{Benso2023,Zhu2025,Wijesena2025,Nuntasuwan2026}.

Recent actuarial research has also moved toward explicitly climate-conditioned
risk-transfer models. \citet{KarimiShokrollahi2026Climate} model excess-of-loss
reinsurance and catastrophe bonds through a Cox process whose catastrophe
intensity depends on a temperature-related climate state, while
\citet{KarimiSalavatiShokrollahi2026Cascade} develop a climate-conditioned
cascade framework for multi-peril reinsurance in which event propagation and
contract-layer losses depend on the evolving physical state. These studies
illustrate two important directions in climate-risk modelling: allowing climate
conditions to modify event frequency and allowing dependence across hazards to
shape insured tail losses. The present paper addresses a complementary problem.
Rather than modelling catastrophe counts or multi-peril loss cascades, we price
a contract written directly on a cumulative temperature index and focus on the
effect of persistent daily dependence and stochastic local variability on the
premium.

Basis risk remains conceptually separate from this stochastic-index valuation
problem. It may arise from spatial separation between the observation site and
the insured exposure, mismatch between the selected meteorological variable and
the loss mechanism, or misspecified coverage dates and trigger levels
\citep{Woodard2008,Dalhaus2016,Lichtenberg2022}. The contract-design literature
has addressed these issues using flexible response functions, quantile methods,
penalized splines, remote-sensing products, and machine-learning indices
\citep{Conradt2015,ConradtQuantile2015,Cesarini2021,Chen2024,Tan2024}. Such
methods improve the mapping from observed weather variables to insured losses,
but they do not eliminate the distinct actuarial requirement to specify the
probability law of the contractual index itself. For a cumulative temperature
contract, that law determines the trigger probability, expected payment, risk
loading, maturity effect, and the probability of exhausting the policy limit.

The established temperature-derivative literature typically begins by removing
deterministic seasonality and modelling the residual temperature process with
Gaussian mean-reverting dynamics \citep{Alaton2002,Cao2004,Benth2011,Benth2012}.
Subsequent contributions allow time-varying mean reversion, regime changes,
non-Gaussian innovations, stochastic volatility, or subordinated operational
time \citep{Zapranis2008,Turkvatan2020,Tong2020,Alfonsi2024,Li2026}.
Incomplete-market and data-sparse valuation methods have likewise been developed
for weather-linked claims \citep{Alexandridis2021}. These models provide a
strong foundation for seasonal temperature modelling and weather-derivative
valuation. For cumulative index insurance, however, a short-memory
specification may understate persistence when dependence among daily anomalies
decays more slowly than an exponential autocorrelation function.

Long-range dependence in atmospheric and surface-temperature records has been
studied using rescaled-range, spectral, detrended-fluctuation, and fractional
integration methods. Evidence has been reported for individual stations,
regional series, global temperature anomalies, and long climate-model
simulations \citep{KoscielnyBunde1998,Caballero2002,Rypdal2013,Ostvand2014,
Ludescher2016,Fredriksen2017,GilAlana2026}. The estimated strength of
persistence is sensitive to the data, horizon, preprocessing, and statistical
method, so long memory should be tested empirically and its parameter
uncertainty propagated rather than imposed mechanically. Its pricing relevance
is nevertheless immediate for cumulative contracts: persistent daily anomalies
change the variance--maturity relation and can materially alter lower-tail
probabilities relative to a short-memory model calibrated at a single horizon.

We represent this persistence using fractional Brownian motion (fBm)
\citep{Mandelbrot1968,Biagini2008}. A key modelling distinction is made between
fBm levels and fractional Gaussian noise. Daily deseasonalized temperature
anomalies are represented by increments of a time-changed fBm, while the
insurance index is the sum of those increments. This increment-based
construction avoids integrating an already cumulative fBm level and produces
the $T^{2H}$ variance scaling associated with persistent increments. The Hurst
parameter $H>1/2$ governs the power-law decay of dependence, whereas a separate
amplitude parameter determines the physical scale of the persistent temperature
component.

Long memory alone does not capture fluctuations in local variability. We
therefore evaluate fBm at the stochastic operational time
\[
    \tau_t=\int_0^t v_s\,\mathrm{d}s,
\]
where $v$ is a positive mean-reverting Cox--Ingersoll--Ross (CIR) rate
\citep{Cox1985}. This construction is related to the economic-time idea of
\citet{Clark1973} and to stochastic time changes used in derivative modelling
\citep{Carr2004,Geman2005}. The raw variability proxy is normalized by its
fitted long-run mean, so that $v_t$ is a dimensionless instantaneous time-change
rate with stationary mean one. The CIR coefficients are then converted to the
daily scale of the insurance contract. With stationary initialization,
$\mathbb{E}[\tau_t]=t$, and we show that the time-changed daily fractional
increments remain stationary and retain the $k^{2H-2}$ covariance decay. The
stochastic clock therefore modifies local variability without eliminating the
long-memory exponent.

The valuation problem differs fundamentally from the pricing of a dynamically
replicable financial derivative. Temperature is neither storable nor directly
tradable, and fBm is not a semimartingale when $H\neq1/2$
\citep{Cheridito2003,Rostek2013}. Hence, in the absence of a sufficiently liquid
temperature-linked hedging asset, a unique complete-market risk-neutral price is
not available. Actuarial premium principles and utility-based methods provide a
more natural framework for such non-traded risks
\citep{Gerber1974,Buhlmann1980,Frittelli2000,Henderson2002,Barrieu2005,
Cochrane2000}. More broadly, recent insurer-side models also emphasize that
risk management must account jointly for non-Gaussian liability shocks and
stochastic financial conditions; for example,
\citet{KarimiShokrollahiShahmoradi2026Surplus} study optimal insurer surplus
management under CIR interest rates and jump-driven liabilities. Our objective
is different but complementary: we work at the contract-pricing level and use
the entropic premium under the physical probability measure, which introduces
an explicit risk loading governed by absolute risk aversion without presenting
the result as a replication price or regulatory-capital formula.

The insured payoff is a capped temperature-deficit contract,
\[
    \Phi_L=\min\{(K-I_T)^+,L\},
\]
where $I_T$ is the cumulative temperature index, $K$ is the trigger, and $L$ is
the policy limit. The cap is both economically realistic and mathematically
important. Because accumulated CIR operational time has unbounded support,
finiteness of a conditional exponential moment does not automatically imply
finiteness after averaging over the stochastic clock. Boundedness of $\Phi_L$
resolves this issue directly and guarantees existence of the entropic premium
for every risk-aversion parameter $\gamma>0$, without imposing an additional
exponential-integrability condition on $\tau_T^{2H}$.

The contribution of the paper is fourfold. First, we formulate a unit-consistent,
increment-based time-changed fractional Brownian motion (TC-fBm) model for
cumulative temperature-index insurance and prove that a stationary normalized
CIR time change preserves both stationarity of the daily anomaly sequence and
its power-law covariance decay. Second, the cumulative-index identity makes the
terminal contract index conditionally Gaussian and dependent on the stochastic
clock only through accumulated operational time $\tau_T$. This yields an exact
conditional exponential kernel for the capped payoff and a semi-analytical
valuation procedure requiring only an outer expectation over the CIR time
change, without direct fBm path simulation or covariance-matrix construction.
Third, we establish qualitative properties that hold without unverified
parameter restrictions: the capped entropic premium is finite and strictly
increasing in absolute risk aversion for a non-degenerate payoff, while the
uncapped conditional benchmark is strictly increasing in conditional index
volatility. We do not impose a universal monotonicity claim with respect to the
Hurst parameter, because changes in $H$ interact with operational time,
maturity, and calibration. Fourth, we quantify model and parameter risk through
an empirically anchored proxy calibration and controlled Brownian and pure-fBm
benchmarks, together with time-discretization, Monte Carlo, moment, limiting-case,
and direct-simulation audits.

For the empirical illustration, daily mean temperatures at Chicago O'Hare over
1990--2020 produce a baseline Hurst estimate of $0.78$ and a model-based 95\%
bootstrap interval of $[0.74,0.82]$. The CIR inputs are obtained from a
reduced-form variability proxy, while the persistent amplitude is matched to a
90-day lower-tail target; the exercise is therefore not presented as joint
structural estimation. For the baseline 90-day capped contract, the entropic
premium is $1.7091$ and the expected payment is $0.4612$. A Brownian benchmark
matched to the unconditional 90-day cumulative variance produces a premium
shortfall of $39.2\%$ at that maturity and $74.6\%$ at 180 days, while pure fBm
with $H=0.78$ produces a 90-day premium shortfall of $48.6\%$. Propagating
Hurst-parameter uncertainty through the fixed contract gives a premium range of
$[0.638,3.466]$. These comparisons show that matching cumulative variance at one
maturity does not reproduce either the scale-mixture effect generated by the
stochastic operational time or the long-memory term structure.

The empirical conclusions are deliberately limited to the stated calibration.
The squared-anomaly process is a noisy reduced-form proxy rather than an
observed latent CIR state, the persistent scale is identified from a single
cumulative-tail target, and matched index--loss data are not available for
direct estimation of basis risk. These limitations motivate joint state-space
estimation, spatial temperature modelling, and contract-specific loss
validation. They do not alter the analytical existence and pricing results, but
they restrict the structural interpretation of the reported numerical
magnitudes.

The remainder of the paper is organized as follows.
Section~\ref{sec:preliminaries} introduces fBm, the stationary normalized CIR
time change, the preservation of long memory, and the entropic premium
principle. Section~\ref{sec:model} formulates the capped cumulative
temperature-index contract and discusses its units, economic interpretation,
and basis risk. Section~\ref{sec:pricing} derives the conditional Gaussian
pricing representation, qualitative premium properties, and the
semi-analytical estimator. Section~\ref{sec:numerical} presents the proxy
calibration, benchmark comparisons, sensitivity analyses, and numerical
validation. Section~\ref{sec:conclusion} concludes. The appendices document the
calibration conventions and numerical audit.

%%%%%%%%%%%%%%%%%%%%%%%%%%%%%%%%%%%%%%%%%%%%%%%%%%%%%%%%%%%

\section{Preliminaries}\label{sec:preliminaries}

The model is formulated at the daily observation scale. A central distinction
is made between fractional Brownian motion and fractional Gaussian noise.
Fractional Brownian motion is a non-stationary cumulative process, whereas
its equally spaced increments form a stationary sequence whose dependence is
controlled by the Hurst parameter. Accordingly, daily deseasonalized
temperature anomalies are represented by increments of a time-changed
fractional Brownian motion rather than by its level. This convention is
consistent with fractional-Gaussian-noise estimation of the Hurst parameter
and avoids an additional integration of the persistent component. The
normalized CIR rate is initialized in its invariant distribution so that the
time-change increments are stationary. We then establish that the resulting
daily fractional increments remain stationary and preserve the power-law
covariance decay associated with long-range dependence.

\begin{definition}[Fractional Brownian motion]\label{def:fbm}
A fractional Brownian motion
$B^H=\{B_t^H:t\geq0\}$, with $H\in(0,1)$, is a centred Gaussian process
satisfying $B_0^H=0$ and
\begin{equation}\label{eq:fbm_covariance}
\mathbb E[B_t^HB_s^H]
=
\frac{1}{2}
\left(
t^{2H}+s^{2H}-|t-s|^{2H}
\right),
\qquad s,t\geq0.
\end{equation}
\end{definition}

For $H=1/2$, Eq.~\eqref{eq:fbm_covariance} reduces to
$\mathbb E[B_t^{1/2}B_s^{1/2}]=\min\{s,t\}$, and $B^{1/2}$ is standard
Brownian motion in distribution. Fractional Brownian motion has stationary
increments and is $H$-self-similar: for every $c>0$,
$\{B_{ct}^H:t\geq0\}$ and $\{c^HB_t^H:t\geq0\}$ have the same
finite-dimensional distributions.

Physical time is measured in calendar days, and $t_0=1$ day denotes the
fixed reference unit. Let $\lambda$ be a non-negative raw variability proxy
with long-run mean $\theta>0$. On the annual time scale, with $a$ measured
in years, suppose that
\[
d\lambda_a
=
\kappa(\theta-\lambda_a)\,da
+
\sigma_\lambda\sqrt{\lambda_a}\,dW_a,
\]
where $\kappa>0$ and $\sigma_\lambda>0$ are the annualized
mean-reversion and volatility coefficients. The normalized rate
$v_a^{(y)}=\lambda_a/\theta$ satisfies
\[
dv_a^{(y)}
=
\kappa(1-v_a^{(y)})\,da
+
\eta\sqrt{v_a^{(y)}}\,dW_a,
\qquad
\eta=\frac{\sigma_\lambda}{\sqrt{\theta}}.
\]
Normalization therefore gives the rate a unit long-run mean without changing
its square-root and mean-reverting structure.

To express the dynamics in calendar days, set $a=t/365$, define
$v_t=v_{t/365}^{(y)}$, and let
$W_t^{(d)}=\sqrt{365}\,W_{t/365}$. Then $W^{(d)}$ is a standard Brownian
motion on the daily scale and
\begin{equation}\label{eq:cir_daily}
dv_t
=
\kappa_d(1-v_t)\,dt
+
\eta_d\sqrt{v_t}\,dW_t^{(d)},
\qquad
\kappa_d=\frac{\kappa}{365},
\qquad
\eta_d=\frac{\eta}{\sqrt{365}}.
\end{equation}
The superscript $(d)$ is suppressed below whenever no ambiguity can arise.

Let
\[
\alpha_v=\frac{2\kappa_d}{\eta_d^2}
=
\frac{2\kappa}{\eta^2}
=
\frac{2\kappa\theta}{\sigma_\lambda^2}.
\]
The invariant density of the normalized CIR process is
\begin{equation}\label{eq:cir_invariant_density}
\pi_v(x)
=
\frac{\alpha_v^{\alpha_v}}{\Gamma(\alpha_v)}
x^{\alpha_v-1}e^{-\alpha_vx},
\qquad x>0.
\end{equation}
This is a Gamma density with shape and rate both equal to $\alpha_v$, and
therefore has mean one and variance $1/\alpha_v$.

\begin{proposition}[Stationary normalized time change]
\label{prop:normalized_time_change}
Suppose that $v_0$ has density $\pi_v$ in
Eq.~\eqref{eq:cir_invariant_density} and is independent of the Brownian motion
driving Eq.~\eqref{eq:cir_daily}. Then $v$ is strictly stationary,
non-negative, and
\[
\mathbb E[v_t]=1,
\qquad
\operatorname{Var}(v_t)=\frac{1}{\alpha_v},
\qquad
\operatorname{Cov}(v_s,v_t)
=
\frac{1}{\alpha_v}e^{-\kappa_d|t-s|}.
\]
The accumulated time change
\begin{equation}\label{eq:normalized_time_change}
\tau_t=\int_0^t v_s\,ds
\end{equation}
is almost surely continuous and non-decreasing, has stationary increments,
and satisfies $\mathbb E[\tau_t]=t$. If $\alpha_v>1$, equivalently
$2\kappa\theta>\sigma_\lambda^2$, then zero is inaccessible and
$v_t>0$ almost surely for every $t\geq0$.
\end{proposition}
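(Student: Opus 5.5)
The plan is to treat the normalized daily CIR equation \eqref{eq:cir_daily} as a time-homogeneous diffusion with a unique strong solution. We then verify that the Gamma law $\pi_v$ in \eqref{eq:cir_invariant_density} is invariant. Strict stationarity follows from the Markov property: if $v_0\sim\pi_v$ and $v_0$ is independent of $W^{(d)}$, then every $v_t$ has law $\pi_v$. Because the transition semigroup is time-homogeneous, the finite-dimensional distributions of $(v_{t_1+h},\dots,v_{t_n+h})$ do not depend on $h$. Existence and pathwise uniqueness come from the Yamada--Watanabe theorem, since the diffusion coefficient $\eta_d\sqrt{x}$ is $1/2$-Hölder and the drift is Lipschitz. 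Non-negativity is the standard CIR fact that the solution started in $[0,\infty)$ stays there. To check invariance, I will verify the stationary Fokker--Planck equation
\[
-\frac{d}{dx}\bigl[\kappa_d(1-x)\pi_v(x)\bigr]+\frac{\eta_d^2}{2}\frac{d^2}{dx^2}\bigl[x\pi_v(x)\bigr]=0 .
\]
Equivalently, the zero-flux condition $\kappa_d(1-x)\pi_v=\tfrac{\eta_d^2}{2}(x\pi_v)'$ can be checked directly using $\alpha_v=2\kappa_d/\eta_d^2$. The alternative is to quote the known Gamma stationary law of the CIR process.

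The moments come next. Taking expectations in the integrated SDE, after localisation so that the stochastic integral is a true martingale (this holds because $\mathbb E\int_0^t v_s\,ds<\infty$), gives $m(t)=\mathbb E[v_t]$ with $m'=\kappa_d(1-m)$ and $m(0)=1$, so $m\equiv1$. The variance $1/\alpha_v$ is simply the Gamma variance. For the covariance with $s\le t$, I will condition on $\mathcal F_s$. The linear drift gives
\[
\mathbb E[v_t\mid\mathcal F_s]=1+(v_s-1)e^{-\kappa_d(t-s)},
\]
which follows from applying Itô's formula to $e^{\kappa_d t}(v_t-1)$. Therefore $\operatorname{Cov}(v_s,v_t)=e^{-\kappa_d(t-s)}\operatorname{Var}(v_s)=\alpha_v^{-1}e^{-\kappa_d|t-s|}$.

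The time change $\tau_t$ is handled pathwise. The paths of $v$ are continuous and non-negative, so $\tau_t$ is continuous, in fact $C^1$, and non-decreasing. Its increments $\tau_{t+h}-\tau_t=\int_t^{t+h}v_s\,ds$ are a measurable functional of the shifted path $(v_{t+u})_{u\in[0,h]}$. Strict stationarity of $v$ as a process therefore gives stationarity of the increments. This uses the joint law of the whole increment vector over several intervals, which is again a functional of the shifted path. Tonelli's theorem gives $\mathbb E[\tau_t]=\int_0^t\mathbb E[v_s]\,ds=t$.

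Finally, the boundary statement is Feller's test. The condition $\alpha_v>1$ is exactly $2\kappa_d\ge\eta_d^2$ with strict inequality, equivalently $2\kappa\theta>\sigma_\lambda^2$. Under it, the scale density $s(x)\propto x^{-\alpha_v}e^{\alpha_v x}$ is non-integrable at $0^+$, so zero is inaccessible. Since $v_0>0$ almost surely (the Gamma law has no atom at zero), $v_t>0$ for all $t$ almost surely.

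I expect the main obstacle to be doing this step carefully, and also making precise the passage from the invariant one-dimensional marginal to strict stationarity of the path. I will handle the latter by citing the Markov property of the strong solution rather than reproving it.
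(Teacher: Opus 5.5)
Your proposal is correct and follows essentially the same route as the paper. The shared steps are: invariance of the Gamma law, where your zero-flux check is the same computation as the paper's speed-density formula $a(x)^{-1}\exp(\int^x 2b/a)$; the conditional-mean identity $\mathbb E[v_t\mid v_s]=1+(v_s-1)e^{-\kappa_d(t-s)}$ for the covariance; pathwise properties plus Tonelli for $\tau$; and the Feller condition for inaccessibility of zero. You simply supply details the paper only asserts, namely Yamada--Watanabe, the Markov-property argument for strict stationarity of the whole path, the It\^o derivation of the conditional mean, and the scale-function test (which in fact already works for $\alpha_v\geq1$).
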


\begin{proof}
The stationary density of a one-dimensional diffusion with drift
$b(x)=\kappa_d(1-x)$ and squared diffusion coefficient
$a(x)=\eta_d^2x$ is proportional to
\[
\frac{1}{a(x)}
\exp\left(
\int^x\frac{2b(y)}{a(y)}\,dy
\right).
\]
Substitution gives
\[
\frac{1}{\eta_d^2x}
\exp\left[
\frac{2\kappa_d}{\eta_d^2}
\int^x\left(\frac{1}{y}-1\right)dy
\right]
\propto
x^{\alpha_v-1}e^{-\alpha_vx},
\]
which, after normalization, yields
Eq.~\eqref{eq:cir_invariant_density}. Initializing $v_0$ under this density
therefore makes the CIR process strictly stationary. The Gamma moments give
$\mathbb E[v_t]=1$ and
$\operatorname{Var}(v_t)=1/\alpha_v$.

For $t\geq s$, the conditional mean of the CIR process is
\[
\mathbb E[v_t\mid v_s]
=
1+(v_s-1)e^{-\kappa_d(t-s)}.
\]
Consequently,
\begin{align*}
\operatorname{Cov}(v_s,v_t)
&=
\mathbb E
\left[
(v_s-1)
\mathbb E[v_t-1\mid v_s]
\right]\\
&=
e^{-\kappa_d(t-s)}
\mathbb E[(v_s-1)^2]
=
\frac{1}{\alpha_v}e^{-\kappa_d(t-s)}.
\end{align*}

Since $v_t\geq0$, the accumulated process $\tau_t$ is non-decreasing.
Continuity follows from continuity of the CIR paths. Strict stationarity of
$v$ implies that, for every $h\geq0$, the distribution of
$\int_t^{t+h}v_s\,ds$ does not depend on $t$; hence, $\tau$ has
stationary increments. Tonelli's theorem gives
\[
\mathbb E[\tau_t]
=
\int_0^t\mathbb E[v_s]\,ds
=
t.
\]
Finally, $\alpha_v>1$ is equivalent to the strict Feller condition and
implies that zero is inaccessible.
\end{proof}

\begin{definition}[CIR-driven time-changed fBm]\label{def:tcfbm}
Let $B^H$ be independent of the stationary normalized CIR process $v$.
The CIR-driven time-changed fractional Brownian motion is
\begin{equation}\label{eq:time_changed_fbm}
X_t=B^H_{\tau_t/t_0},
\qquad t\geq0.
\end{equation}
The division by $t_0$ makes the operational-time argument dimensionless.
\end{definition}

Although $X$ is conditionally Gaussian given the time-change process, it is
generally non-Gaussian unconditionally. The stationary increments of $\tau$,
together with the stationary increments of $B^H$, imply that $X$ has
stationary increments in calendar time.

Let $s_j=j\Delta$, where $\Delta>0$ is the observation interval, and
define
\[
\Delta X_j
=
X_{s_{j+1}}-X_{s_j}
=
B^H_{\tau_{s_{j+1}}/t_0}
-
B^H_{\tau_{s_j}/t_0},
\qquad j=0,1,\ldots.
\]
The daily model corresponds to $\Delta=t_0=1$ day.

\begin{lemma}[Conditional covariance of time-changed increments]
\label{lem:increment_covariance}
Let $U_j=\tau_{s_j}/t_0$ and
$\mathcal G^\tau=\sigma\{\tau_t:t\geq0\}$. Conditional on
$\mathcal G^\tau$, the increment sequence is jointly Gaussian with zero
mean and covariance
\begin{align}
\operatorname{Cov}
\left(
\Delta X_i,\Delta X_j
\mid\mathcal G^\tau
\right)
=
\frac{1}{2}
\Big[
&
|U_{i+1}-U_j|^{2H}
+
|U_i-U_{j+1}|^{2H}
\notag\\
&
-
|U_{i+1}-U_{j+1}|^{2H}
-
|U_i-U_j|^{2H}
\Big].
\label{eq:conditional_increment_covariance}
\end{align}
\end{lemma}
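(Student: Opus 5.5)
The plan is to condition on the whole clock and exploit the independence of $B^H$ and $v$ from Definition~\ref{def:tcfbm}. Since $\tau$ is a measurable functional of $v$, the $\sigma$-field $\mathcal G^\tau$ is independent of $B^H$. By the freezing (substitution) lemma for independent processes, the conditional law of the family $(\Delta X_j)_{j\ge0}$ given $\mathcal G^\tau$ is obtained by evaluating the law of $(B^H_{u_{j+1}}-B^H_{u_j})_{j\ge0}$ at the deterministic times $u_j=U_j(\omega)$.

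To make this rigorous without heavy machinery, I would fix finitely many indices $j_1,\dots,j_n$ and a bounded Borel test function $f$ on $\mathbb R^n$. Define
\[
g(u_0,\dots,u_{m})=\mathbb E\bigl[f\bigl(B^H_{u_{j_1+1}}-B^H_{u_{j_1}},\dots,B^H_{u_{j_n+1}}-B^H_{u_{j_n}}\bigr)\bigr],
\]
which is measurable in $(u_0,\dots,u_m)$ because $B^H$ has continuous paths, so it is jointly measurable. Independence then gives
\[
\mathbb E[f(\Delta X_{j_1},\dots,\Delta X_{j_n})\mid\mathcal G^\tau]=g(U_0,\dots,U_m)
\]
almost surely. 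For fixed deterministic $u$, the vector of fBm increments is a linear image of the centred Gaussian vector $(B^H_{u_0},\dots,B^H_{u_m})$, so it is centred Gaussian. It follows that the conditional finite-dimensional laws are centred Gaussian, which gives zero mean and joint Gaussianity conditionally.

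It then remains to compute the covariance for deterministic times. Expand $\mathbb E[(B_{u_{i+1}}-B_{u_i})(B_{u_{j+1}}-B_{u_j})]$ into four terms using Eq.~\eqref{eq:fbm_covariance}. The terms $u^{2H}$ cancel pairwise, leaving $\tfrac12[|u_{i+1}-u_j|^{2H}+|u_i-u_{j+1}|^{2H}-|u_{i+1}-u_{j+1}|^{2H}-|u_i-u_j|^{2H}]$. Substituting $u=U$ yields Eq.~\eqref{eq:conditional_increment_covariance}. Finiteness of the moments is automatic, since the $U_j$ are a.s.\ finite by Proposition~\ref{prop:normalized_time_change}.

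The only delicate step is the conditioning/substitution argument: it requires the independence of $\mathcal G^\tau$ from $B^H$ and the joint measurability of $g$. I would handle it with a monotone-class argument on product sets $A\cap C$, $A\in\mathcal G^\tau$, $C\in\sigma(B^H)$, or simply cite the standard freezing lemma. The algebraic cancellation itself is routine.
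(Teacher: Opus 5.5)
Your proposal is correct and follows the same route as the paper: freeze the operational times using the independence of $B^H$ from the clock, note that the increment vector is a linear image of a centred Gaussian vector, and expand the covariance bilinearly so that the $u^{2H}$ terms cancel. The only difference is that you justify the conditioning step explicitly through the freezing lemma and the measurability of $g$, whereas the paper simply asserts it.
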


\begin{proof}
Conditional on $\mathcal G^\tau$, the operational times
$U_0,U_1,\ldots$ are fixed. Independence of $B^H$ and $\tau$ implies
that the vector of fBm values at these times remains jointly Gaussian.
Therefore, the increment vector is also jointly Gaussian with zero mean.
Bilinearity of covariance gives
\begin{align*}
\operatorname{Cov}(\Delta X_i,\Delta X_j\mid\mathcal G^\tau)
={}&
\operatorname{Cov}(B^H_{U_{i+1}},B^H_{U_{j+1}})
-
\operatorname{Cov}(B^H_{U_{i+1}},B^H_{U_j})\\
&-
\operatorname{Cov}(B^H_{U_i},B^H_{U_{j+1}})
+
\operatorname{Cov}(B^H_{U_i},B^H_{U_j}).
\end{align*}
Substituting Eq.~\eqref{eq:fbm_covariance} cancels all individual power terms
and gives Eq.~\eqref{eq:conditional_increment_covariance}.
\end{proof}

The next result establishes that the CIR-driven stochastic time change does
not destroy the long-range dependence of the fractional increments.

\begin{theorem}[Stationarity and preservation of long memory]
\label{thm:time_changed_long_memory}
Suppose that $H\in(1/2,1)$, $v$ is initialized under its invariant
distribution, and the strict Feller condition $\alpha_v>1$ holds. Then
$\{\Delta X_j\}_{j\geq0}$ is strictly stationary. If
$\delta=\Delta/t_0$ and
\[
r_H(k)=\operatorname{Cov}(\Delta X_0,\Delta X_k),
\]
then
\begin{equation}\label{eq:time_changed_lrd}
r_H(k)
\sim
H(2H-1)\delta^{2H}k^{2H-2},
\qquad k\to\infty.
\end{equation}
Consequently, $r_H(k)>0$ for sufficiently large $k$ and
\[
\sum_{k=1}^{\infty}r_H(k)=\infty.
\]
Thus, the time-changed fractional increment sequence retains long-range
dependence with the same power-law exponent $2H-2$ as fractional Gaussian
noise.
\end{theorem}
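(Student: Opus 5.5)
Strict stationarity and the asymptotic covariance will both come from conditioning on the time change, as in Lemma~\ref{lem:increment_covariance}. For stationarity, the joint law of $(\Delta X_{m},\dots,\Delta X_{m+n})$ is a mixture of centred Gaussian laws. The mixing covariance matrix depends only on the vector of differences $(U_{m+i}-U_{m+j})_{0\le i,j\le n+1}$, because fBm covariance enters only through differences in Eq.~\eqref{eq:conditional_increment_covariance}. These differences are integrals of $v$ over windows shifted by $s_m$. By the strict stationarity of $v$ (Proposition~\ref{prop:normalized_time_change}), their joint law does not depend on $m$. Hence the characteristic function $\mathbb E[\exp(-\tfrac12\xi^\top\Sigma(\mathcal G^\tau)\xi)]$ is shift invariant, and strict stationarity follows.

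For the covariance, I will take expectations in Eq.~\eqref{eq:conditional_increment_covariance} with $i=0$, $j=k$. Write $D=U_k-U_1\ge 0$, $A=U_1-U_0$ and $C=U_{k+1}-U_k$. The bracket becomes $\tfrac12[ (D+C)^{2H}+(D+A)^{2H}-(D+A+C)^{2H}-D^{2H}]$, which is $-\tfrac12$ times the second mixed difference of $f(x)=x^{2H}$. Since $f$ is smooth on $(0,\infty)$, this equals $\tfrac12\int_0^A\!\int_0^C f''(D+a+c)\,dc\,da = H(2H-1)\int_0^A\!\int_0^C (D+a+c)^{2H-2}\,dc\,da$, which is nonnegative because $H>1/2$. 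So $r_H(k)=H(2H-1)\mathbb E[\int_0^A\int_0^C (D+a+c)^{2H-2}]$. On the event $D>0$ the integrand is at most $D^{2H-2}$ and at least $(D+A+C)^{2H-2}$. Thus
\[
H(2H-1)\,\mathbb E\big[AC\,(D+A+C)^{2H-2}\big]\le r_H(k)\le H(2H-1)\,\mathbb E\big[AC\,D^{2H-2}\big].
\]

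The main step is to show $\mathbb E[AC\,(D/k)^{2H-2}]\to \delta^{2H}$ and the analogous statement for $D+A+C$. By stationarity and ergodicity of the CIR process (exponential mixing from the covariance formula), $D/k=\frac1{kt_0}\int_{\Delta}^{k\Delta}v_s\,ds\to\delta$ almost surely. Meanwhile $A$ and $C$ are integrals of $v$ over unit windows that become asymptotically independent, because $C$ sits at time $k\Delta$. Using $\mathbb E[AC]=\delta^2+O(e^{-\kappa_d(k-1)\Delta})\to\delta^2$ from the covariance of $v$, the target is $\mathbb E[AC(D/k)^{2H-2}]\to\delta^{2}\cdot\delta^{2H-2}$.

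The obstacle is uniform integrability, since $2H-2<0$ and $D$ may be small. I would control $\mathbb E[AC\,(D/k)^{2(2H-2)}]$ by Cauchy--Schwarz. This needs negative moments of $\tau_t/t$ bounded uniformly in $t\ge1$. These follow from the CIR Laplace transform $\mathbb E[e^{-u\tau_t}]$, which is explicit and affine, together with $\mathbb E[X^{-p}]=\Gamma(p)^{-1}\int_0^\infty u^{p-1}\mathbb E[e^{-uX}]du$. The Feller condition $\alpha_v>1$ helps ensure $v$ is not stuck near zero, so the Laplace transform decays like $e^{-cut}$-type bounds for $u\lesssim 1$ and polynomially beyond. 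Positive moments of $A$ and $C$ are finite since $v$ has Gamma marginals. The law of $D$ does not depend on $A$ and $C$ in a problematic way: $D$ covers a window of length $(k-1)\Delta$ separated from $A$'s window and adjacent to $C$'s. Alternatively, I can bound $(D/k)^{2H-2}$ by splitting on $\{D/k<\delta/2\}$ and using an exponential large-deviation bound for this event obtained from the same Laplace transform, combined with Hölder.

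With both bounds converging to $\delta^{2H}$, dominated convergence along almost-sure convergence gives $r_H(k)k^{2-2H}\to H(2H-1)\delta^{2H}$, which is Eq.~\eqref{eq:time_changed_lrd}. Positivity for large $k$ is immediate, and divergence of the sum follows by comparison with $\sum k^{2H-2}$, whose exponent is at least $-1$.
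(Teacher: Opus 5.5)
Your argument works, and it departs from the paper's in the asymptotic step. The stationarity argument and the exact representation $c_k=H(2H-1)\int_0^A\int_0^C(D+a+c)^{2H-2}\,dc\,da$ coincide with the paper's; its $R_k,D_k$ are your $D,C$. One slip: with $i=0$, $j=k$ the bracket in Lemma~\ref{lem:increment_covariance} is $\tfrac12[D^{2H}+(D+A+C)^{2H}-(D+A)^{2H}-(D+C)^{2H}]$, i.e.\ $+\tfrac12$ times the mixed second difference. Your displayed bracket therefore has the wrong sign, although the integral you then write is correct.

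The paper splits $c_k=T_1+T_2$, with leading term $T_1=H(2H-1)R_k^{2H-2}AD_k$ and a mean-value remainder of order $R_k^{2H-3}(A^2D_k+AD_k^2)$. It gets uniform integrability of $k^{2-2H}T_1$ from the one-line Jensen bound $\mathbb E[(k/R_k)^p]\le C\,\mathbb E[v_0^{-p}]$, which is finite only for $p<\alpha_v$. This is where $\alpha_v>1$ enters, through H\"older at an exponent just above one, and the paper concludes by Vitali. Your sandwich $AC(D+A+C)^{2H-2}\le\cdots\le AC\,D^{2H-2}$ makes the remainder unnecessary: both bounds have the same limit, and the lower one inherits uniform integrability from the upper one. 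It thereby also avoids the paper's handling of $T_2$, which uses the $\omega$-dependent eventual bound $R_k\ge(\delta/2)k$ before taking expectations.

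Your Laplace-transform route to negative moments is heavier but stronger. It gives $\sup_{t\ge\Delta}\mathbb E[(\tau_t/t)^{-p}]<\infty$ for every $p>0$, whatever the value of $\alpha_v$. You do need more than the Jensen bound supplies, because Cauchy--Schwarz asks for order $4(2-2H)$, which can exceed $\alpha_v$ when only $\alpha_v>1$ is assumed.

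When you write this out, note that the large-$u$ decay doing the work is not polynomial. The deterministic factor of the affine transform gives $\mathbb E[e^{-u\tau_t}]\le 2^{\alpha_v}\exp\{-2\kappa_d ut/(\kappa_d+g_u)\}$ with $g_u=\sqrt{\kappa_d^2+2\eta_d^2u}$, i.e.\ decay like $e^{-ct\sqrt u}$. After substituting $u=w/t$, the resulting bound is decreasing in $t$. The polynomial factor $u^{-\alpha_v/2}$ coming from the average over $v_0$ would by itself only yield orders below $\alpha_v/2$. Alternatively, you could keep the paper's Jensen bound and use H\"older with exponent close to one instead of Cauchy--Schwarz.

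Finally, because $C$ moves with $k$, $AC\,(D/k)^{2H-2}$ has neither an almost-sure limit nor a fixed dominating function. The last step should therefore be Vitali, not dominated convergence. The sequence $AC[(D/k)^{2H-2}-\delta^{2H-2}]$ tends to $0$ in probability (a tight sequence times a null one) and is uniformly integrable, hence tends to $0$ in $L^1$. Together with $\mathbb E[AC]\to\delta^2$, this completes the argument. Convergence of $D/k$ in probability already follows from $\operatorname{Var}(\tau_t)=O(t)$, so no ergodic theorem is needed.
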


\begin{proof}
Because $v$ is strictly stationary, the integrated process $\tau$ has
stationary increments. For any integer shift $m$, the joint distribution of
the operational-time intervals
\[
\left\{
\tau_{s_{j+1+m}}-\tau_{s_{j+m}}
\right\}_{j=0}^{n}
\]
is identical to that of
$\{\tau_{s_{j+1}}-\tau_{s_j}\}_{j=0}^{n}$. Since fractional Brownian
motion has stationary increments and is independent of $\tau$, the joint
distribution of
$(\Delta X_m,\ldots,\Delta X_{m+n})$ does not depend on $m$. Hence,
$\{\Delta X_j\}$ is strictly stationary.

For $k\geq1$, define the dimensionless operational-time quantities
\[
A=\frac{\tau_\Delta}{t_0},
\qquad
R_k=\frac{\tau_{k\Delta}-\tau_\Delta}{t_0},
\qquad
D_k=\frac{\tau_{(k+1)\Delta}-\tau_{k\Delta}}{t_0}.
\]
Conditional on the time-change path, Eq.~\eqref{eq:fbm_covariance} gives
\begin{align*}
c_k
:={}&
\operatorname{Cov}
(\Delta X_0,\Delta X_k\mid\mathcal G^\tau)\\
={}&
\frac{1}{2}
\left[
(A+R_k+D_k)^{2H}
+
R_k^{2H}
-
(A+R_k)^{2H}
-
(R_k+D_k)^{2H}
\right].
\end{align*}
Applying the fundamental theorem of calculus twice gives the exact
representation
\[
c_k
=
H(2H-1)
\int_0^A\int_0^{D_k}
(R_k+x+y)^{2H-2}\,dy\,dx.
\]
The integrand is positive because $H>1/2$, so $c_k\geq0$.
By stationarity and ergodicity of the CIR process,
 $R_k/k = \frac{1}{kt_0}\int_\Delta^{k\Delta} v_s\, ds \to \delta$ almost surely.
To justify convergence of expectations, set $q = 2-2H \in (0,1)$. 
Since the stationary Gamma shape satisfies $\alpha_v > 1$, one can choose 
 $\varepsilon > 0$ such that $q(1+\varepsilon) < \alpha_v$. 
We decompose the conditional covariance as $c_k = T_1 + T_2$, where
\begin{align*}
T_1 &= H(2H-1) R_k^{2H-2} A D_k, \\
T_2 &= H(2H-1) \int_0^A \int_0^{D_k} \left[ (R_k+x+y)^{2H-2} - R_k^{2H-2} \right] dy\, dx.
\end{align*}
By the mean value theorem, for some $\theta \in (0,1)$,
\[
\left| (R_k+x+y)^{2H-2} - R_k^{2H-2} \right| 
\le |2H-2| (x+y) (R_k + \theta(x+y))^{2H-3}.
\]
Since $x+y \le A+D_k$ and $2H-3 < 0$, we have $(R_k + \theta(x+y))^{2H-3} \le R_k^{2H-3}$ 
for sufficiently large $k$. Integrating this bound over $x \in [0,A]$ and $y \in [0,D_k]$ yields
\[
|T_2| \le C_H R_k^{2H-3} (A^2 D_k + A D_k^2),
\]
where $C_H = |H(2H-2)|/2 > 0$ is a constant.

To establish the $L^1$ convergence of $\xi_k := k^{q}c_k$ to 
 $H(2H-1)\delta^{2H-2}AD_k$, we first bound the scaled inverse moment 
 $k/R_k$. Noting that this quantity is proportional to the reciprocal of the 
sample mean $\frac{1}{(k-1)\Delta}\int_\Delta^{k\Delta} v_s\, ds$, 
Jensen's inequality for the convex function $x \mapsto x^{-q(1+\varepsilon)}$ gives
\[
\mathbb{E}\left[ \left(\frac{k}{R_k}\right)^{q(1+\varepsilon)} \right] 
\le C \; \mathbb{E}\left[ \left( \frac{1}{(k-1)\Delta}\int_\Delta^{k\Delta} v_s\, ds \right)^{-q(1+\varepsilon)} \right]
\le C \; \mathbb{E}\left[ v_0^{-q(1+\varepsilon)} \right] < \infty.
\]
The final bound holds because $v_0$ follows a Gamma distribution with shape 
 $\alpha_v$ and $q(1+\varepsilon) < \alpha_v$. Crucially, this bound is uniform in $k$.
Furthermore, because $A$ and $D_k$ are integrals of the stationary CIR rate over 
fixed-length intervals, the stationary Gamma law guarantees that all their positive 
moments are finite and uniformly bounded in $k$.

We now apply Hölder's inequality to the leading term 
 $k^{q}T_1 = H(2H-1) (k/R_k)^q \cdot AD_k$. 
Let $\varepsilon' = \varepsilon / (1+\varepsilon) \in (0,1)$, and let 
 $p_1 = 1+\varepsilon$ and $p_2 = 1+1/\varepsilon$ be conjugate exponents. 
Raising the product to the power $1+\varepsilon'$ and applying Hölder's inequality yields
\begin{align*}
\mathbb{E}\left[ \left| \left(\frac{k}{R_k}\right)^q A D_k \right|^{1+\varepsilon'} \right] 
&= \mathbb{E}\left[ \left(\frac{k}{R_k}\right)^{q(1+\varepsilon')} (A D_k)^{1+\varepsilon'} \right] \\
&\le \left( \mathbb{E}\left[ \left(\frac{k}{R_k}\right)^{q(1+\varepsilon)} \right] \right)^{\frac{1+\varepsilon'}{p_1}} 
\left( \mathbb{E}\left[ (A D_k)^{p_2(1+\varepsilon')} \right] \right)^{\frac{1}{p_2}}.
\end{align*}
By the scaled inverse moment bound, the first factor is finite and uniformly bounded 
in $k$. Because $p_2(1+\varepsilon') = 1+1/\varepsilon$ is a fixed positive constant, 
the moment bounds on $A$ and $D_k$ guarantee that the second factor is also finite 
and uniformly bounded in $k$. This establishes the de la Vallée-Poussin condition 
for the sequence $\{k^{q}T_1\}$.

For the remainder term, since $R_k/k \to \delta$ almost surely, for sufficiently 
large $k$ we have $R_k \ge (\delta/2) k$ almost surely. Thus,
\[
k^{q} |T_2| \le C_H k^{2-2H} R_k^{2H-3} (A^2 D_k + A D_k^2) 
\le C_\delta k^{-1} (A^2 D_k + A D_k^2) \quad \text{a.s.}
\]
Taking expectations and using the finite third moments of $A$ and $D_k$ shows that 
 $\mathbb{E}[k^{q} |T_2|] \le C_\delta k^{-1} (\mathbb{E}[A^3] + \mathbb{E}[D_k^3]) \to 0$ 
as $k \to \infty$. Combining this with the uniform integrability of $k^{q}T_1$, 
we conclude that the sequence 
 $\eta_k := \xi_k - H(2H-1)\delta^{2H-2}AD_k$ is uniformly integrable. 
Since $\eta_k \to 0$ almost surely, the Vitali convergence theorem implies 
 $\eta_k \to 0$ in $L^1$.

Using the covariance formula in Proposition~\ref{prop:normalized_time_change}, 
 $\text{Cov}(A,D_k) = O(e^{-\kappa_d(k-1)\Delta}) \to 0$, 
so $\mathbb{E}[AD_k] \to \delta^2$.
Taking expectations in the preceding $L^1$ limit gives
\[
\lim_{k\to\infty} k^{2-2H}r_H(k) 
= H(2H-1)\delta^{2H-2}\delta^2 = H(2H-1)\delta^{2H},
\]
which proves Eq.~\eqref{eq:time_changed_lrd}.  Since
$2H-2\in(-1,0)$, the covariance series diverges.
\end{proof}

For the daily convention $\Delta=t_0$, Eq.~\eqref{eq:time_changed_lrd}
simplifies to
$r_H(k)\sim H(2H-1)k^{2H-2}$. Hence, the CIR time change modifies local
variability without changing the asymptotic memory exponent. This result
provides the theoretical justification for estimating $H$ from daily
fractional-Gaussian dependence while using the CIR process to represent
time-varying variability.

The cumulative sum of the daily fractional increments has a simple terminal
representation.

\begin{lemma}[Conditional cumulative variance]\label{lem:variance}
Let $T=n\Delta$ and define
$S_n^H=\sum_{j=0}^{n-1}\Delta X_j$. Then
\[
S_n^H
=
B^H_{\tau_T/t_0}.
\]
Conditional on $\mathcal G^\tau$,
\begin{equation}\label{eq:conditional_variance}
S_n^H\mid\mathcal G^\tau
\sim
N\left(
0,
\left(\frac{\tau_T}{t_0}\right)^{2H}
\right).
\end{equation}
Consequently,
\begin{equation}\label{eq:unconditional_variance}
\operatorname{Var}(S_n^H)
=
\mathbb E
\left[
\left(\frac{\tau_T}{t_0}\right)^{2H}
\right].
\end{equation}
For the deterministic time change $\tau_t=t$,
\begin{equation}\label{eq:deterministic_variance}
\operatorname{Var}(S_n^H)
=
\left(\frac{T}{t_0}\right)^{2H}.
\end{equation}
\end{lemma}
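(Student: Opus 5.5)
The identity $S_n^H=B^H_{\tau_T/t_0}$ comes from telescoping. By the definition of $\Delta X_j$ with $s_j=j\Delta$,
\[
S_n^H=\sum_{j=0}^{n-1}\bigl(B^H_{\tau_{s_{j+1}}/t_0}-B^H_{\tau_{s_j}/t_0}\bigr)=B^H_{\tau_{s_n}/t_0}-B^H_{\tau_{0}/t_0}.
\]
Since $\tau_0=0$ and $B^H_0=0$, and $s_n=n\Delta=T$, this equals $B^H_{\tau_T/t_0}$.

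For the conditional law we use Lemma~\ref{lem:increment_covariance}, or argue directly. Conditional on $\mathcal G^\tau$, the operational time $u=\tau_T/t_0$ is fixed. Because $B^H$ is independent of $\tau$, the conditional law of $B^H_{u}$ is the law of fBm at a deterministic time, namely $N(0,u^{2H})$ by Eq.~\eqref{eq:fbm_covariance} with $s=t=u$. To make this rigorous, we compute the conditional characteristic function $\mathbb E[e^{i\xi B^H_{\tau_T/t_0}}\mid\mathcal G^\tau]$. Independence lets us apply the freezing (substitution) lemma for functionals $g(B^H,\tau)$ with $B^H$ independent of $\mathcal G^\tau$ and $\tau_T$ being $\mathcal G^\tau$-measurable. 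This gives $\exp(-\tfrac12\xi^2(\tau_T/t_0)^{2H})$, which is Eq.~\eqref{eq:conditional_variance}.

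For the unconditional variance we use the law of total variance. The conditional mean is zero, so $\operatorname{Var}(S_n^H)=\mathbb E[\operatorname{Var}(S_n^H\mid\mathcal G^\tau)]=\mathbb E[(\tau_T/t_0)^{2H}]$. Equivalently, we apply the tower property to $\mathbb E[(S_n^H)^2]$ and then Tonelli's theorem.

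The only point needing care is finiteness, which is needed for the variance formula to be meaningful rather than merely an identity in $[0,\infty]$. Since $2H<2$, we have $(\tau_T/t_0)^{2H}\le 1+(\tau_T/t_0)^2$. Moreover $\mathbb E[\tau_T^2]\le T\int_0^T\mathbb E[v_s^2]\,ds<\infty$ by Cauchy--Schwarz and the Gamma stationary law from Proposition~\ref{prop:normalized_time_change}. So the right-hand side of Eq.~\eqref{eq:unconditional_variance} is finite, and $S_n^H\in L^2$.

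Finally, for the deterministic clock $\tau_t=t$, the expectation is trivial and gives $(T/t_0)^{2H}$, which is Eq.~\eqref{eq:deterministic_variance}. The main (mild) obstacle is the conditioning step. It should be handled explicitly through the independence/freezing lemma rather than by informally "fixing" the random time inside the Gaussian process.
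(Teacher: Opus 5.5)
Your proposal is correct and follows essentially the same route as the paper: telescope the sum to $B^H_{\tau_T/t_0}$ using $\tau_0=0$ and $B^H_0=0$, obtain conditional Gaussianity from the independence of $B^H$ and $\tau$, apply the law of total variance with zero conditional mean, and substitute $\tau_T=T$ for the deterministic clock. Your freezing-lemma justification of the conditioning step and your finiteness check via $x^{2H}\le 1+x^2$ are more careful than the paper's proof of this lemma, which leaves finiteness of $\mathbb E[(\tau_T/t_0)^{2H}]$ to the later Proposition~\ref{prop:index_conditional_distribution}.
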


\begin{proof}
The sum telescopes:
\[
S_n^H
=
\sum_{j=0}^{n-1}
\left(
B^H_{\tau_{s_{j+1}}/t_0}
-
B^H_{\tau_{s_j}/t_0}
\right)
=
B^H_{\tau_T/t_0},
\]
because $\tau_0=0$ and $B_0^H=0$. Conditional on $\tau_T$,
fractional Brownian motion is Gaussian with variance
$(\tau_T/t_0)^{2H}$, proving
Eq.~\eqref{eq:conditional_variance}. The conditional mean is zero, so the law
of total variance yields Eq.~\eqref{eq:unconditional_variance}. Setting
$\tau_T=T$ proves Eq.~\eqref{eq:deterministic_variance}.
\end{proof}

\begin{corollary}[Variance contribution of the stochastic time change]
\label{cor:time_change_variance}
For $H\in(1/2,1)$,
\begin{equation}\label{eq:time_change_variance_bound}
\operatorname{Var}(S_n^H)
\geq
\left(\frac{T}{t_0}\right)^{2H},
\end{equation}
with strict inequality whenever $\tau_T$ is non-degenerate.
\end{corollary}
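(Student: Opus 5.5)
The plan is to combine the variance identity of Lemma~\ref{lem:variance} with Jensen's inequality applied to the convex map $x\mapsto x^{2H}$. By Eq.~\eqref{eq:unconditional_variance},
\[
\operatorname{Var}(S_n^H)=\mathbb E\big[(\tau_T/t_0)^{2H}\big].
\]
For $H\in(1/2,1)$ the exponent satisfies $2H>1$. Hence $\varphi(x)=x^{2H}$ is convex on $[0,\infty)$; in fact it is strictly convex, since $\varphi''(x)=2H(2H-1)x^{2H-2}>0$ for $x>0$. Proposition~\ref{prop:normalized_time_change} shows that $\tau_T\ge 0$ and $\mathbb E[\tau_T]=T$. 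Thus $\tau_T$ is integrable and nonnegative, and $\varphi(\tau_T/t_0)$ is a nonnegative random variable, so its expectation is well defined in $[0,\infty]$. Jensen's inequality then gives
\[
\mathbb E\big[(\tau_T/t_0)^{2H}\big]\ \ge\ \big(\mathbb E[\tau_T]/t_0\big)^{2H}=(T/t_0)^{2H},
\]
which is Eq.~\eqref{eq:time_change_variance_bound}. Finiteness of the left side is not needed for the inequality. If desired, it follows from $\tau_T\le T\max_{s\le T}v_s$, or more simply from the finite moments of the CIR integral over a bounded interval, as already used in the proof of Theorem~\ref{thm:time_changed_long_memory}.

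For the strict inequality I would use the equality case of Jensen's inequality for a strictly convex function. Let $m=T/t_0$ and take the supporting line at $m$. Then
\[
\varphi(x)\ge\varphi(m)+\varphi'(m)(x-m),
\]
with equality only at $x=m$, by strict convexity. Apply this with $x=\tau_T/t_0$ and take expectations; the linear term has mean zero. This gives
\[
\mathbb E[\varphi(\tau_T/t_0)]-\varphi(m)=\mathbb E[g(\tau_T/t_0)],
\]
where $g\ge0$ vanishes only at $m$. If $\tau_T$ is non-degenerate, then $\mathbb P(\tau_T\ne T)>0$, so $g(\tau_T/t_0)>0$ on a set of positive probability. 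Hence the difference is strictly positive, or $+\infty$, which would also give strictness.

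The only delicate point is the strict case. It is routine once non-degeneracy is read as ``$\tau_T$ is not almost surely constant''. I would also note, as a remark, that under the stationary CIR specification this holds whenever $\eta_d>0$. The reason is that Proposition~\ref{prop:normalized_time_change} gives
\[
\operatorname{Var}(\tau_T)=\int_0^T\!\!\int_0^T \alpha_v^{-1}e^{-\kappa_d|t-s|}\,ds\,dt>0,
\]
so in the model the inequality is always strict.
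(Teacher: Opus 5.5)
Your proposal is correct and follows the same route as the paper: the variance identity of Lemma~\ref{lem:variance}, Jensen's inequality for the strictly convex map $x\mapsto x^{2H}$ together with $\mathbb E[\tau_T]=T$, and strict convexity for the non-degenerate case. Your supporting-line argument for strictness and your remark that $\operatorname{Var}(\tau_T)>0$ whenever $\eta_d>0$ only spell out details that the paper leaves implicit.
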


\begin{proof}
Since $2H>1$, the function $x\mapsto x^{2H}$ is strictly convex.
Proposition~\ref{prop:normalized_time_change} gives
$\mathbb E[\tau_T]=T$. Jensen's inequality therefore yields
\[
\mathbb E
\left[
\left(\frac{\tau_T}{t_0}\right)^{2H}
\right]
\geq
\left(
\frac{\mathbb E[\tau_T]}{t_0}
\right)^{2H}
=
\left(\frac{T}{t_0}\right)^{2H}.
\]
Strict convexity gives strict inequality when $\tau_T$ is non-degenerate.
\end{proof}

Corollary~\ref{cor:time_change_variance} provides a model-level result that
is distinct from any claim concerning the premium. It shows that stochastic
variation in operational time increases unconditional cumulative-index
variance when $H>1/2$. Because the capped payoff is not globally convex,
this variance comparison alone does not imply a universal ordering of capped
entropic premiums; premium effects are evaluated from the exact conditional
formula developed below.

We next introduce the actuarial valuation principle. Let $\mathcal L$
denote a generic non-negative insurer liability, distinguishing it from the
policy limit $L$.

\begin{definition}[Entropic premium]\label{def:entropic}
For a non-negative liability $\mathcal L$ and absolute risk-aversion
parameter $\gamma>0$, the entropic premium is
\begin{equation}\label{eq:entropic_premium}
\Pi_\gamma(\mathcal L)
=
\frac{1}{\gamma}
\log\mathbb E[e^{\gamma\mathcal L}],
\end{equation}
provided that the exponential moment is finite. Premiums are expressed in
maturity benefit units; deterministic discounting may be applied separately
when present-value units are required.
\end{definition}

The entropic premium is the exponential-utility certainty equivalent of an
unhedgeable loss \citep{Barrieu2005,Henderson2002}. It differs from the
minimal-entropy martingale measure for traded assets
\citep{Frittelli2000} and from good-deal bounds
\citep{Cochrane2000}. Since no sufficiently liquid temperature-linked asset
is assumed, valuation is performed under the physical probability law. The
resulting quantity is interpreted as a utility-based insurance premium and
risk loading, not as a regulatory solvency-capital requirement.

\begin{lemma}[Capped Gaussian exponential moment]
\label{lem:capped_normal}
Let $Z\sim N(m,s^2)$ with $s>0$, and define
$\phi_L(Z)=\min\{(K-Z)^+,L\}$, where $L>0$. Set
$d_0=(K-m)/s$ and $d_L=(K-L-m)/s$. Then
\begin{align}
\mathbb E[e^{\gamma\phi_L(Z)}]
={}&
1-\Phi_{\mathcal N}(d_0)
+
e^{\gamma L}\Phi_{\mathcal N}(d_L)
\notag\\
&+
\exp\left(
\gamma(K-m)+\frac{1}{2}\gamma^2s^2
\right)
\left[
\Phi_{\mathcal N}(d_0+\gamma s)
-
\Phi_{\mathcal N}(d_L+\gamma s)
\right],
\label{eq:capped_normal_mgf}
\end{align}
and
\begin{align}
\mathbb E[\phi_L(Z)]
={}&
L\Phi_{\mathcal N}(d_L)
+
(K-m)
\left[
\Phi_{\mathcal N}(d_0)-\Phi_{\mathcal N}(d_L)
\right]
\notag\\
&+
s
\left[
\varphi(d_0)-\varphi(d_L)
\right],
\label{eq:capped_normal_mean}
\end{align}
where $\Phi_{\mathcal N}$ and $\varphi$ are the standard normal
distribution function and density.
\end{lemma}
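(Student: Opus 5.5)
The plan is to split the real line according to the three regimes of the capped payoff. Since $L>0$, the thresholds satisfy $K-L<K$, and
\[
\phi_L(z)=
\begin{cases}
0, & z\geq K,\\
K-z, & K-L<z<K,\\
L, & z\leq K-L.
\end{cases}
\]
Write $Z=m+sW$ with $W\sim N(0,1)$. In standardized units the events become $\{W\geq d_0\}$, $\{d_L<W<d_0\}$ and $\{W\leq d_L\}$, with $d_L<d_0$. Both expectations in the statement will then be computed as sums of three integrals over these regions. Boundedness of $\phi_L$ guarantees that all quantities are finite, so no integrability issue arises.

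For Eq.~\eqref{eq:capped_normal_mgf}, the first and third regions are immediate.
\begin{itemize}[leftmargin=*]
\item The region $W\geq d_0$ contributes $1-\Phi_{\mathcal N}(d_0)$.
\item The region $W\leq d_L$ contributes $e^{\gamma L}\Phi_{\mathcal N}(d_L)$.
\item On the middle region, $e^{\gamma(K-Z)}=e^{\gamma(K-m)}e^{-\gamma sW}$. One must therefore evaluate
\[
\int_{d_L}^{d_0}e^{-\gamma s w}\varphi(w)\,dw .
\]
\end{itemize}
Completing the square gives $e^{-\gamma s w}\varphi(w)=e^{\gamma^2s^2/2}\varphi(w+\gamma s)$. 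The shift $u=w+\gamma s$ then turns the integral into $e^{\gamma^2s^2/2}\,[\Phi_{\mathcal N}(d_0+\gamma s)-\Phi_{\mathcal N}(d_L+\gamma s)]$. Collecting the three pieces yields Eq.~\eqref{eq:capped_normal_mgf}.

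For Eq.~\eqref{eq:capped_normal_mean}, the same decomposition applies.
\begin{itemize}[leftmargin=*]
\item The top region contributes zero.
\item The bottom region contributes $L\Phi_{\mathcal N}(d_L)$.
\item The middle region gives $\mathbb E[(K-m-sW)\mathbf 1\{d_L<W<d_0\}]$.
\end{itemize}
The middle term splits as $(K-m)[\Phi_{\mathcal N}(d_0)-\Phi_{\mathcal N}(d_L)]-s\int_{d_L}^{d_0}w\varphi(w)\,dw$. Using $\varphi'(w)=-w\varphi(w)$, the last integral equals $\varphi(d_L)-\varphi(d_0)$. The middle term therefore contributes $s[\varphi(d_0)-\varphi(d_L)]$, which gives Eq.~\eqref{eq:capped_normal_mean}.

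There is no real obstacle here. The only points requiring care are the sign bookkeeping in the exponent $-\gamma sW$, which produces the $+\gamma s$ shift in the arguments, and correctly orienting the region boundaries $d_L<d_0$. The boundary points $Z=K$ and $Z=K-L$ have probability zero because $s>0$, so the choice of strict versus non-strict inequalities does not matter.
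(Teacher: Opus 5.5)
Your proof is correct and follows the paper's argument essentially step for step. Both use the three-region partition of the payoff, the standardization $Z=m+sW$, completing the square to produce the shift $w\mapsto w+\gamma s$ on the middle region, and the identity $\varphi'(w)=-w\varphi(w)$ for the first moment; your remark that the boundary points have probability zero when $s>0$ makes explicit a detail the paper leaves implicit.
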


\begin{proof}
The payoff equals $L$ for $Z\leq K-L$, equals $K-Z$ for
$K-L<Z\leq K$, and equals zero for $Z>K$. Hence,
\begin{align*}
\mathbb E[e^{\gamma\phi_L(Z)}]
={}&
e^{\gamma L}\mathbb P(Z\leq K-L)
+
\mathbb E\left[
e^{\gamma(K-Z)}
\mathbf 1_{\{K-L<Z\leq K\}}
\right]\\
&+
\mathbb P(Z>K).
\end{align*}
Writing $Z=m+sY$, where $Y\sim N(0,1)$, the middle term becomes
\[
e^{\gamma(K-m)}
\int_{d_L}^{d_0}
e^{-\gamma sy}\varphi(y)\,dy.
\]
Since
\[
-\gamma sy-\frac{y^2}{2}
=
-\frac{1}{2}(y+\gamma s)^2
+
\frac{1}{2}\gamma^2s^2,
\]
we have
$e^{-\gamma sy}\varphi(y)
=e^{\gamma^2s^2/2}\varphi(y+\gamma s)$. Substitution yields
Eq.~\eqref{eq:capped_normal_mgf}.

For the first moment, the same partition gives
\[
\mathbb E[\phi_L(Z)]
=
L\Phi_{\mathcal N}(d_L)
+
\int_{d_L}^{d_0}(K-m-sy)\varphi(y)\,dy.
\]
Using $\varphi'(y)=-y\varphi(y)$ gives
\[
\int_{d_L}^{d_0}y\varphi(y)\,dy
=
\varphi(d_L)-\varphi(d_0),
\]
which proves Eq.~\eqref{eq:capped_normal_mean}.
\end{proof}

When $s=0$, the Gaussian variable is deterministic and the preceding
expressions are interpreted by continuity:
\[
\mathbb E[e^{\gamma\phi_L(Z)}]
=
e^{\gamma\min\{(K-m)^+,L\}},
\qquad
\mathbb E[\phi_L(Z)]
=
\min\{(K-m)^+,L\}.
\]

\begin{proposition}[Existence and premium bounds]\label{prop:existence}
For every $\gamma>0$,
\[
1
\leq
\mathbb E[e^{\gamma\phi_L(Z)}]
\leq
e^{\gamma L}.
\]
Consequently,
\begin{equation}\label{eq:entropic_bounds}
\mathbb E[\phi_L(Z)]
\leq
\Pi_\gamma(\phi_L(Z))
\leq
L.
\end{equation}
These bounds hold conditionally on every realization of the time-change
process and unconditionally under its stationary law.
\end{proposition}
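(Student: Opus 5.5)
The argument rests on the pointwise bounds $0\le\phi_L(Z)\le L$, which hold for every real $Z$ directly from the definition $\phi_L(Z)=\min\{(K-Z)^+,L\}$. No Gaussian structure is needed at this stage. Since $\gamma>0$ and $x\mapsto e^{\gamma x}$ is increasing, we get
\[
1\le e^{\gamma\phi_L(Z)}\le e^{\gamma L}
\]
almost surely. Taking expectations gives the first display. In particular, the exponential moment in Definition~\ref{def:entropic} is finite for every $\gamma>0$, so $\Pi_\gamma(\phi_L(Z))$ is well defined.

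For Eq.~\eqref{eq:entropic_bounds}, the upper bound follows by applying the increasing map $x\mapsto\gamma^{-1}\log x$ to $\mathbb E[e^{\gamma\phi_L(Z)}]\le e^{\gamma L}$. For the lower bound, apply Jensen's inequality to the convex function $x\mapsto e^{\gamma x}$:
\[
\mathbb E[e^{\gamma\phi_L(Z)}]\ge e^{\gamma\mathbb E[\phi_L(Z)]}.
\]
The mean $\mathbb E[\phi_L(Z)]$ is finite because $\phi_L$ is bounded. Taking $\gamma^{-1}\log$ of both sides gives $\mathbb E[\phi_L(Z)]\le\Pi_\gamma(\phi_L(Z))$. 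One could also check this against the explicit formulas of Lemma~\ref{lem:capped_normal}, but the monotonicity and Jensen argument is cleaner and also covers the degenerate case $s=0$.

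For the final sentence of the statement, we work conditionally on $\mathcal G^\tau$. The pointwise bound on $\phi_L$ does not involve the law of $Z$, so the same chain holds with conditional expectations. Conditional Jensen gives $\mathbb E[\phi_L(Z)\mid\mathcal G^\tau]\le\gamma^{-1}\log\mathbb E[e^{\gamma\phi_L(Z)}\mid\mathcal G^\tau]\le L$ almost surely. This covers every realization of $\tau_T$, including the case $\tau_T=0$, where the conditional variance vanishes. For the unconditional statement under the stationary law, we take $Z$ to be the mixture $B^H_{\tau_T/t_0}$ (shifted as the index requires). Here $\phi_L(Z)$ is still bounded in $[0,L]$, so the unconditional exponential moment lies in $[1,e^{\gamma L}]$. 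Jensen again gives the lower bound. Equivalently, the tower property averages the conditional bounds, and integrability is never an issue because the integrand is bounded.

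The only step needing any care is this passage from conditional to unconditional. As the introduction stresses, the difficulty for uncapped payoffs is that averaging finite conditional exponential moments over the unbounded support of $\tau_T$ could diverge. Here the uniform bound $e^{\gamma L}$, which does not depend on $\tau_T$, removes this issue: dominated convergence and Tonelli apply trivially, and no condition on $\tau_T^{2H}$ is needed.
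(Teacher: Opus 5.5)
Your proof is correct and follows the paper's argument essentially verbatim. The pointwise bounds $0\le\phi_L(Z)\le L$ give $1\le e^{\gamma\phi_L(Z)}\le e^{\gamma L}$, Jensen's inequality gives the lower entropic bound, and because the bounds are pointwise they survive conditional and unconditional averaging; your extra remarks on the degenerate case $\tau_T=0$ and on the uniform bound removing the integrability issue for the mixture are accurate but not needed.
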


\begin{proof}
The pointwise inequality $0\leq\phi_L(Z)\leq L$ gives
$1\leq e^{\gamma\phi_L(Z)}\leq e^{\gamma L}$. Taking expectations proves
finiteness and the upper bound. Jensen's inequality gives
\[
e^{\gamma\mathbb E[\phi_L(Z)]}
\leq
\mathbb E[e^{\gamma\phi_L(Z)}],
\]
and taking logarithms and dividing by $\gamma>0$ proves the lower bound.
Because the payoff bounds hold pointwise, they remain valid under conditional
and unconditional averaging.
\end{proof}

%%%%%%%%%%%%%%%%%%%%%%%%%%%%%%%%%%%%%%%%%%%%%%%%%%%%%%%%%%%

\section{Climate-index contract}\label{sec:model}

This section connects the increment-based stochastic model introduced in
Section~\ref{sec:preliminaries} to a cumulative temperature-index insurance
contract. The distinction between the level and the increments of fractional
Brownian motion is essential in this application. Daily temperature
anomalies are represented by increments of the time-changed process, whereas
the cumulative contract index is obtained by summing those daily anomalies.
This construction preserves the stationary-increment interpretation of the
temperature residuals and produces the $T^{2H}$ variance scaling associated
with long-range dependence.

Let $\Delta=t_0=1$ day denote the observation interval, define
$s_j=j\Delta$, and suppose that the contract horizon satisfies
$T=n\Delta$. For $j=0,\ldots,n-1$, let $A_j$ be the deseasonalized
temperature anomaly observed over the calendar interval
$(s_j,s_{j+1}]$. We model $A_j$ as
\begin{equation}\label{eq:temperature_model}
    A_j
    =
    \mu_j+\sigma_*\Delta X_j,
    \qquad
    \Delta X_j
    =
    B^H_{\tau_{s_{j+1}}/t_0}
    -
    B^H_{\tau_{s_j}/t_0},
\end{equation}
where $\mu_j$ is a deterministic conditional-mean adjustment and
$\sigma_*>0$ is the reference amplitude of the persistent component. For
fully deseasonalized and detrended observations, one may set $\mu_j=0$;
allowing a deterministic nonzero value accommodates a remaining forecast
adjustment without changing the probabilistic arguments below.

The model separates two distinct features of temperature risk. The Hurst
parameter $H$ determines the persistence of the daily anomalies. When
$H>1/2$, adverse innovations remain positively correlated over long
horizons, so several moderately cold days can combine into an economically
significant cumulative event. The normalized CIR process $v$, in contrast,
controls the random rate at which operational time accumulates. Conditional
on the time-change realization, the persistent component of the anomaly on
day $j$ has variance
\[
    \operatorname{Var}
    \bigl(A_j\mid\mathcal F_T^v\bigr)
    =
    \sigma_*^2
    \left(
        \frac{\tau_{s_{j+1}}-\tau_{s_j}}{t_0}
    \right)^{2H}.
\]
Thus, $H$ governs dependence across days, whereas variation in the CIR rate
changes the local intensity of the persistent component. Under the stationary
initialization of the normalized CIR process, the anomaly component
$A_j-\mu_j$ has stationary unconditional increments while retaining
time-varying conditional variability.

The parameter $\sigma_*$ has units of degrees Celsius because
$\Delta X_j$ is dimensionless. It represents the standard-deviation scale
of the persistent component over one reference day under the deterministic
unit-rate time change. It should not be identified with the sample standard
deviation of the observed daily anomalies. The latter may also contain
measurement error, short-memory fluctuations and location-specific noise.
Equating these two quantities would incorrectly allocate all one-day
variability to the long-memory component and could substantially distort the
maturity scaling of the cumulative index.

The cumulative temperature index over the insurance horizon is defined by
summing the daily anomalies:
\begin{align}
    I_T
    &=
    \Delta\sum_{j=0}^{n-1}A_j \notag\\
    &=
    \bar\mu_T
    +
    \sigma_*\Delta
    B^H_{\tau_T/t_0},
    \qquad
    \bar\mu_T
    =
    \Delta\sum_{j=0}^{n-1}\mu_j .
\label{eq:index}
\end{align}
The second equality follows from the telescoping identity
$\sum_{j=0}^{n-1}\Delta X_j=X_T-X_0$, together with
$\tau_0=0$ and $B_0^H=0$. This identity is important computationally:
although the dependence structure of the daily anomalies is generated by the
complete time-change path, the distribution of the terminal cumulative index
conditional on that path depends on it only through the accumulated value
$\tau_T$.

Since $A_j$ is measured in degrees Celsius and $\Delta$ is measured in
days, $I_T$, $\bar\mu_T$, and the contract strike are measured in
$^{\circ}\mathrm{C}$-days. The stochastic term
$\sigma_*\Delta B^H_{\tau_T/t_0}$ has the same units, so no implicit
conversion between daily temperature units and cumulative-index units is
required.

\begin{proposition}[Distribution and variance of the cumulative index]
\label{prop:index_conditional_distribution}
Let
$\mathcal F_T^v=\sigma\{v_s:0\leq s\leq T\}$, and define the conditional
index standard deviation by
\begin{equation}\label{eq:conditional_index_scale}
    s_T(\tau)
    =
    \sigma_*\Delta
    \left(\frac{\tau_T}{t_0}\right)^H .
\end{equation}
Then
\begin{equation}\label{eq:index_conditional_law}
    I_T\mid\mathcal F_T^v
    \sim
    N\left(
        \bar\mu_T,\,
        s_T^2(\tau)
    \right).
\end{equation}
The unconditional first two moments are finite and satisfy
\begin{equation}\label{eq:index_unconditional_moments}
    \mathbb E[I_T]
    =
    \bar\mu_T,
    \qquad
    \operatorname{Var}(I_T)
    =
    \sigma_*^2\Delta^2
    \mathbb E\left[
        \left(\frac{\tau_T}{t_0}\right)^{2H}
    \right].
\end{equation}
For the deterministic time change $\tau_T=T$, this variance reduces to
\[
    \operatorname{Var}_{\mathrm{det}}(I_T)
    =
    \sigma_*^2\Delta^2
    \left(\frac{T}{t_0}\right)^{2H}.
\]
Moreover, if $H>1/2$, the stationary normalization
$\mathbb E[\tau_T]=T$ implies
\[
    \operatorname{Var}(I_T)
    \geq
    \operatorname{Var}_{\mathrm{det}}(I_T),
\]
with strict inequality whenever $\tau_T$ is non-degenerate.
\end{proposition}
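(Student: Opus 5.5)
The plan is to reduce everything to Lemma~\ref{lem:variance} and Corollary~\ref{cor:time_change_variance} via the telescoping identity in Eq.~\eqref{eq:index}.

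\emph{Conditional law.} By Eq.~\eqref{eq:index}, $I_T=\bar\mu_T+\sigma_*\Delta\,B^H_{\tau_T/t_0}$ with $\bar\mu_T$ deterministic. Since $\tau_T=\int_0^T v_s\,ds$ is $\mathcal F_T^v$-measurable and $B^H$ is independent of $v$, conditioning on $\mathcal F_T^v$ freezes $\tau_T$. Then $B^H_{\tau_T/t_0}$ is centred Gaussian with variance $(\tau_T/t_0)^{2H}$, by Eq.~\eqref{eq:fbm_covariance} with $s=t$. This is exactly the argument of Lemma~\ref{lem:variance}, with $\mathcal G^\tau$ replaced by the filtration $\mathcal F_T^v$ on $[0,T]$. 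To justify the freezing rigorously, I will use a standard conditioning-with-independence (freezing) lemma: for bounded measurable $f$,
\[
\mathbb E[f(B^H_{\tau_T/t_0})\mid\mathcal F_T^v]=g(\tau_T),
\qquad
g(u)=\mathbb E[f(B^H_{u/t_0})].
\]
Applying the affine map $x\mapsto\bar\mu_T+\sigma_*\Delta x$ then gives Eq.~\eqref{eq:index_conditional_law} with $s_T(\tau)$ as in Eq.~\eqref{eq:conditional_index_scale}.

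\emph{Unconditional moments.} The main point is to verify finiteness, since the distribution of $\tau_T$ is not compactly supported. The bound $v_s\ge0$ together with Minkowski's integral inequality gives
\[
\|\tau_T\|_{2H}\le\int_0^T\|v_s\|_{2H}\,ds=T\,\bigl(\mathbb E[v_0^{2H}]\bigr)^{1/(2H)}<\infty .
\]
This is finite because the stationary Gamma law of Proposition~\ref{prop:normalized_time_change} has all positive moments. Alternatively, since $2H<2$, it suffices to use $\mathbb E[\tau_T^2]<\infty$ from the covariance formula of that proposition. Hence $\mathbb E[s_T^2(\tau)]<\infty$.

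With finiteness in hand, the tower property gives $\mathbb E[I_T]=\bar\mu_T$. The law of total variance, with conditional mean constant and equal to $\bar\mu_T$, yields
\[
\operatorname{Var}(I_T)=\mathbb E[s_T^2(\tau)]=\sigma_*^2\Delta^2\,\mathbb E\!\left[(\tau_T/t_0)^{2H}\right],
\]
which is Eq.~\eqref{eq:index_unconditional_moments}. The deterministic case follows by setting $\tau_T=T$.

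\emph{Variance comparison.} Multiplying Corollary~\ref{cor:time_change_variance} by $\sigma_*^2\Delta^2>0$ gives the inequality. That corollary uses Jensen's inequality for the strictly convex map $x\mapsto x^{2H}$ together with $\mathbb E[\tau_T]=T$ from Proposition~\ref{prop:normalized_time_change}. Strictness when $\tau_T$ is non-degenerate is inherited directly from the strict convexity used there.

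The only step needing care is the integrability of $\tau_T^{2H}$; everything else is bookkeeping on top of the earlier lemmas.
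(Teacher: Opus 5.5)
Your proof follows the paper's route in most respects. The conditional law comes from independence of $B^H$ and $v$, which freezes the $\mathcal F_T^v$-measurable $\tau_T$. The moments come from the tower property and total variance. The comparison is Jensen with $\mathbb E[\tau_T]=T$; citing Corollary~\ref{cor:time_change_variance} is equivalent to the paper's re-derivation.

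The one divergence is the integrability of $\tau_T^{2H}$. The paper splits into two cases:
\begin{itemize}
\item $2H\le1$: concavity gives $\mathbb E[\tau_T^{2H}]\le T^{2H}$.
\item $2H>1$: the pathwise H\"older bound $\tau_T^{2H}\le T^{2H-1}\int_0^T v_s^{2H}\,ds$, together with Tonelli and the Gamma moments, finishes.
\end{itemize}

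Your Minkowski display is the norm form of the second case and carries the same restriction. For $H<1/2$ the exponent $2H$ is below one, so $\|\cdot\|_{2H}$ is not a norm. Minkowski's integral inequality then reverses for nonnegative integrands, so the display is unjustified. It in fact fails for large $T$: $\mathbb E[v_0^{2H}]<1$ by strict Jensen, while $\tau_T/T\to1$ in $L^2$.

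The first parts of the proposition are stated for all $H\in(0,1)$, so you should rely on your alternative: $\mathbb E[\tau_T^{2H}]\le(\mathbb E[\tau_T^2])^{H}$ with $\mathbb E[\tau_T^2]=T^2+\operatorname{Var}(\tau_T)<\infty$ from the covariance formula. This single bound handles every $H\in(0,1)$ and avoids the paper's case split.
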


\begin{proof}
The random variable $\tau_T$ is
$\mathcal F_T^v$-measurable. Independence between the fractional Brownian
motion and the CIR process implies that, conditional on
$\mathcal F_T^v$,
\[
    B^H_{\tau_T/t_0}
    \sim
    N\left(
        0,\,
        \left(\frac{\tau_T}{t_0}\right)^{2H}
    \right).
\]
Substitution into Eq.~\eqref{eq:index} and the affine transformation property
of Gaussian random variables yield Eq.~\eqref{eq:index_conditional_law} and
the conditional variance $s_T^2(\tau)$.

The conditional mean is deterministic:
$\mathbb E[I_T\mid\mathcal F_T^v]=\bar\mu_T$. Consequently, the tower
property gives
\[
    \mathbb E[I_T]
    =
    \mathbb E\!\left[
        \mathbb E[I_T\mid\mathcal F_T^v]
    \right]
    =
    \bar\mu_T.
\]
The law of total variance gives
\begin{align*}
    \operatorname{Var}(I_T)
    &=
    \mathbb E\!\left[
        \operatorname{Var}(I_T\mid\mathcal F_T^v)
    \right]
    +
    \operatorname{Var}\!\left(
        \mathbb E[I_T\mid\mathcal F_T^v]
    \right)\\
    &=
    \sigma_*^2\Delta^2
    \mathbb E\left[
        \left(\frac{\tau_T}{t_0}\right)^{2H}
    \right],
\end{align*}
because the variance of the deterministic conditional mean is zero.

It remains to verify finiteness. Set $p=2H\in(0,2)$. If $p\leq1$,
concavity of $x\mapsto x^p$ and
$\mathbb E[\tau_T]=T$ imply
$\mathbb E[\tau_T^p]\leq T^p<\infty$. If $p>1$, Hölder's inequality gives
\[
    \tau_T^p
    =
    \left(\int_0^T v_s\,ds\right)^p
    \leq
    T^{p-1}\int_0^T v_s^p\,ds.
\]
The stationary CIR distribution has finite moments of every positive order.
Tonelli's theorem therefore yields
\[
    \mathbb E[\tau_T^p]
    \leq
    T^{p-1}\int_0^T\mathbb E[v_s^p]\,ds
    <\infty.
\]
Hence, the variance in Eq.~\eqref{eq:index_unconditional_moments} is finite.

For the deterministic time change, substituting $\tau_T=T$ gives the
stated deterministic variance. Finally, when $H>1/2$, the function
$x\mapsto x^{2H}$ is strictly convex. Jensen's inequality and
$\mathbb E[\tau_T]=T$ imply
\[
    \mathbb E\left[
        \left(\frac{\tau_T}{t_0}\right)^{2H}
    \right]
    \geq
    \left(
        \frac{\mathbb E[\tau_T]}{t_0}
    \right)^{2H}
    =
    \left(\frac{T}{t_0}\right)^{2H}.
\]
Strict convexity makes the inequality strict whenever $\tau_T$ is
non-degenerate, completing the proof.
\end{proof}

Equation~\eqref{eq:index_unconditional_moments} clarifies the distinct
maturity effects generated by persistence and stochastic variability. Under
deterministic time, the cumulative variance grows as $T^{2H}$; it grows
linearly when $H=1/2$ and faster than linearly when $H>1/2$. The stochastic
time change adds a distributional mixture through $\tau_T$, and for
$H>1/2$ it also increases unconditional variance relative to deterministic
time at the same reference amplitude. This result concerns variance and does
not, by itself, imply a universal ordering of capped entropic premiums,
because the capped payoff is not globally convex. Premium comparisons are
therefore derived from the exact conditional pricing formula and evaluated
over explicitly stated parameter configurations.

We consider a temperature-deficit contract that pays when the cumulative
index falls below a strike $K$. Let $q>0$ denote the monetary benefit rate
per $^{\circ}\mathrm{C}$-day and let $M>0$ be the monetary policy limit.
The monetary payoff is
$\Phi_{q,M}=\min\{q(K-I_T)^+,M\}$. Defining the policy limit in index units
by $L=M/q$, the normalized payoff used in the analytical derivations is
\begin{equation}\label{eq:capped_payoff}
    \Phi_L
    =
    \phi_L(I_T)
    =
    \min\{(K-I_T)^+,L\},
    \qquad
    \Phi_{q,M}=q\Phi_L .
\end{equation}
Here, $K$, $I_T$, and $L$ are measured in
$^{\circ}\mathrm{C}$-days. The numerical convention $q=1$ means one
monetary unit per $^{\circ}\mathrm{C}$-day, so the numerical value of the
normalized payoff equals the corresponding monetary payment.

The strike determines the cumulative temperature level at which coverage is
activated. Values of $I_T$ below $K$ represent increasingly severe
temperature deficits. The normalized payoff can equivalently be written as
\[
    \phi_L(x)
    =
    \begin{cases}
        L,   & x\leq K-L,\\
        K-x, & K-L<x<K,\\
        0,   & x\geq K.
    \end{cases}
\]
Thus, moderate adverse events generate partial payments, sufficiently severe
events exhaust the policy limit, and no payment is made when the index
finishes above the strike.

\begin{proposition}[Contract properties and benefit-rate scaling]
\label{prop:contract_properties}
The function $\phi_L$ is non-increasing, bounded between zero and $L$,
and globally Lipschitz with constant one. Consequently, $\Phi_L$ has finite
moments of every positive order and its entropic premium is finite for every
$\gamma>0$. For $M=qL$, the monetary premium satisfies
\begin{equation}\label{eq:benefit_rate_scaling}
    \Pi_\gamma(\Phi_{q,M})
    =
    q\,\Pi_{\gamma q}(\Phi_L).
\end{equation}
\end{proposition}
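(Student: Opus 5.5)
The plan is to verify the three properties of $\phi_L$ directly from its piecewise form, derive the moment and existence claims as consequences of boundedness, and obtain the scaling identity by a change of variables inside the exponential.

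\emph{Monotonicity, bounds and Lipschitz continuity.} Write $\phi_L=g\circ h$ with $h(x)=(K-x)^+$ and $g(y)=\min\{y,L\}$ on $[0,\infty)$.
\begin{itemize}
\item Both $h$ and $g$ are $1$-Lipschitz: $h$ because $x\mapsto K-x$ is $1$-Lipschitz and $y\mapsto y^+$ is a contraction, and $g$ as the minimum of two $1$-Lipschitz functions. The composition $\phi_L$ is therefore globally Lipschitz with constant one.
\item $h$ is non-increasing and $g$ is non-decreasing, so $\phi_L$ is non-increasing.
\item The bounds $0\le\phi_L\le L$ follow immediately from $h\ge0$ and the cap.
\end{itemize}
One can also read these off the three-case representation: $\phi_L$ is constant, then has slope $-1$, then is constant, and is continuous at $K-L$ and $K$.

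\emph{Moments and existence.} Since $0\le\Phi_L\le L$ pointwise, $\mathbb E[\Phi_L^p]\le L^p$ for every $p>0$, and $1\le e^{\gamma\Phi_L}\le e^{\gamma L}$. Hence the entropic premium in Definition~\ref{def:entropic} is finite for every $\gamma>0$. This is exactly Proposition~\ref{prop:existence} applied under the unconditional law of $I_T$, which is itself a mixture over $\tau_T$ via Proposition~\ref{prop:index_conditional_distribution}. No integrability of $\tau_T$ is needed, because the bounds are pointwise.

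\emph{Benefit-rate scaling.} The key identity is the positive homogeneity $\min\{qa,qL\}=q\min\{a,L\}$ for $q>0$, which gives $\Phi_{q,M}=q\Phi_L$ when $M=qL$. Then
\[
\Pi_\gamma(\Phi_{q,M})
=\frac1\gamma\log\mathbb E\bigl[e^{\gamma q\Phi_L}\bigr]
=q\cdot\frac{1}{\gamma q}\log\mathbb E\bigl[e^{(\gamma q)\Phi_L}\bigr]
=q\,\Pi_{\gamma q}(\Phi_L).
\]
Both sides are finite by the previous step, since $\gamma q>0$.

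None of the steps is a real obstacle. The only point needing care is that the Lipschitz constant is exactly one and not merely at most some larger constant; the composition argument settles this cleanly.
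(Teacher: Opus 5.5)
Your proof is correct and follows the paper's argument essentially step for step. The paper likewise obtains monotonicity and the Lipschitz bound by composing $(K-x)^+$ with the non-expansive truncation $z\mapsto\min\{z,L\}$, deduces finiteness of all moments and of $\mathbb E[e^{\gamma\Phi_L}]$ from $0\le\Phi_L\le L$, and gets the scaling identity from $\Phi_{q,M}=q\Phi_L$ by writing $\gamma^{-1}=q(\gamma q)^{-1}$. One small wording point: the composition argument shows that one is an admissible Lipschitz constant, which is all the statement requires, whereas its sharpness comes from the slope $-1$ on $(K-L,K)$, which you also note.
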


\begin{proof}
If $x\leq y$, then $(K-x)^+\geq(K-y)^+$. Applying the increasing
truncation map $z\mapsto\min\{z,L\}$ gives
$\phi_L(x)\geq\phi_L(y)$, so $\phi_L$ is non-increasing. Its definition
also directly gives $0\leq\phi_L(x)\leq L$.

The positive-part map is $1$-Lipschitz, and truncation at $L$ cannot
increase distances. Therefore, for every $x,y\in\mathbb R$,
\[
    |\phi_L(x)-\phi_L(y)|
    \leq
    |(K-x)^+-(K-y)^+|
    \leq
    |x-y|.
\]
Boundedness implies
$\mathbb E[\Phi_L^p]\leq L^p$ for every $p>0$, as well as
$\mathbb E[e^{\gamma\Phi_L}]\leq e^{\gamma L}$ for every $\gamma>0$.
Hence, all positive moments and the entropic premium are finite.

Finally, Eq.~\eqref{eq:capped_payoff} gives
$\Phi_{q,M}=q\Phi_L$. Applying the entropic-premium definition yields
\begin{align*}
    \Pi_\gamma(\Phi_{q,M})
    &=
    \frac{1}{\gamma}
    \log\mathbb E\left[e^{\gamma q\Phi_L}\right]\\
    &=
    q\left\{
        \frac{1}{\gamma q}
        \log\mathbb E\left[e^{(\gamma q)\Phi_L}\right]
    \right\}
    =
    q\,\Pi_{\gamma q}(\Phi_L),
\end{align*}
which proves Eq.~\eqref{eq:benefit_rate_scaling}.
\end{proof}

The scaling relation shows that changing the monetary benefit rate also
changes the economically relevant risk-aversion scale. If $\gamma$ is
measured in inverse monetary units, then $\gamma q$ is the corresponding
risk-aversion coefficient for the payoff expressed in index-benefit units.
Premiums from contracts with different benefit rates should therefore not be
compared by multiplying the normalized premium by $q$ while leaving the
effective risk-aversion coefficient unchanged.

The policy limit is economically meaningful because it specifies the
insurer's maximum payment under the contract. It is also mathematically
important. For the uncapped payoff $(K-I_T)^+$, the conditional Gaussian
exponential moment contains a factor proportional to
\[
    \exp\left\{
        \frac{1}{2}\gamma^2\sigma_*^2\Delta^2
        \left(\frac{\tau_T}{t_0}\right)^{2H}
    \right\}.
\]
Consequently, conditional finiteness for each fixed value of $\tau_T$ does
not by itself establish unconditional finiteness. An additional exponential
integrability condition on $(\tau_T/t_0)^{2H}$ would be required. The capped
contract avoids this restriction because
$1\leq e^{\gamma\Phi_L}\leq e^{\gamma L}$ uniformly over all realizations
of the stochastic time change.

The pricing model determines the distribution and premium of the contractual
index payment; it does not, without additional data, determine how closely
that payment compensates the policyholder's realized economic loss. The
difference between the index payment and the underlying loss constitutes
basis risk and may arise from spatial mismatch, measurement error, nonlinear
damage relationships, or imperfect tail dependence. A contract-specific
assessment would require matched observations of the meteorological index,
insured exposure and realized losses. Correlation alone is insufficient to
convert basis risk into a premium or capital-deficit bound. Accordingly, no
such bound is imposed in the present pricing framework; basis risk is treated
as a separate empirical limitation of index-based insurance.
%%%%%%%%%%%%%%%%%%%%%%%%%%%%%%%%%%%%%%%%%%%%%%%%%%%%%%%%%%%

\section{Semi-analytical pricing}\label{sec:pricing}

The increment-based formulation of the cumulative index produces a substantial
simplification of the pricing problem. Conditional on the normalized
CIR-driven stochastic time change, the terminal index is Gaussian, and its
distribution depends on the time-change realization only through the
accumulated operational time $\tau_T$. The conditional exponential moment
of the capped payoff can therefore be evaluated analytically, leaving a
single outer expectation with respect to the distribution of $\tau_T$.
Routine valuation consequently requires neither direct simulation of
fractional Brownian paths nor evaluation of a pathwise covariance matrix.

Let $a=K-\bar\mu_T$ denote the distance between the strike and the
deterministic mean of the cumulative index. For a fixed realization of the
time-change process, define
\[
    s=s_T(\tau)
    =
    \sigma_*\Delta
    \left(\frac{\tau_T}{t_0}\right)^H,
    \qquad
    d_0=\frac{a}{s},
    \qquad
    d_L=\frac{a-L}{s}.
\]
The quantities $d_0$ and $d_L$ are the standardized exercise and
policy-limit boundaries, respectively.

\begin{theorem}[Entropic premium of the capped TC-fBm contract]
\label{thm:premium}
Under the model specified in Section~\ref{sec:model}, the entropic premium of
the capped payoff in Eq.~\eqref{eq:capped_payoff} is
\begin{equation}\label{eq:main_premium}
    \Pi_\gamma(\Phi_L)
    =
    \frac{1}{\gamma}
    \log
    \mathbb E^v
    \left[
        G_L\bigl(s_T(\tau)\bigr)
    \right],
\end{equation}
where $\mathbb E^v$ denotes expectation with respect to the normalized
CIR-driven stochastic time change and, for $s>0$,
\begin{align}
    G_L(s)
    ={}&
    1-\Phi_{\mathcal N}\left(\frac{a}{s}\right)
    +
    e^{\gamma L}
    \Phi_{\mathcal N}\left(\frac{a-L}{s}\right)
    \notag\\
    &+
    \exp\left(
        \gamma a+\frac{1}{2}\gamma^2s^2
    \right)
    \left[
        \Phi_{\mathcal N}\left(
            \frac{a}{s}+\gamma s
        \right)
        -
        \Phi_{\mathcal N}\left(
            \frac{a-L}{s}+\gamma s
        \right)
    \right].
\label{eq:conditional_kernel}
\end{align}
The continuous extension at $s=0$ is
\[
    G_L(0)
    =
    \exp\left\{
        \gamma\min\bigl[(K-\bar\mu_T)^+,L\bigr]
    \right\}.
\]
For every $s\geq0$,
$1\leq G_L(s)\leq e^{\gamma L}$. Consequently, the unconditional premium
exists for every $\gamma>0$ and satisfies
$0\leq\Pi_\gamma(\Phi_L)\leq L$.
\end{theorem}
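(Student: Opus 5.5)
The plan is to condition on the time change and then average. First, by Proposition~\ref{prop:index_conditional_distribution}, conditional on $\mathcal F_T^v$ the index satisfies $I_T\sim N(\bar\mu_T,s_T^2(\tau))$, and $s_T(\tau)$ is $\mathcal F_T^v$-measurable. Since $0\leq\Phi_L\leq L$, the variable $e^{\gamma\Phi_L}$ is bounded and nonnegative. The tower property therefore applies with no integrability caveat:
\[
\mathbb E[e^{\gamma\Phi_L}]=\mathbb E^v\bigl[\mathbb E[e^{\gamma\phi_L(I_T)}\mid\mathcal F_T^v]\bigr].
\]
On the event $\{\tau_T>0\}$, where $s>0$, I will apply Lemma~\ref{lem:capped_normal} with $m=\bar\mu_T$ and $s=s_T(\tau)$. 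The identities $K-m=a$, $d_0=a/s$ and $d_L=(a-L)/s$ show that Eq.~\eqref{eq:capped_normal_mgf} is exactly $G_L(s)$ in Eq.~\eqref{eq:conditional_kernel}. Formally this is a regular conditional distribution argument: by independence of $B^H$ and $v$, the conditional expectation equals $g(\tau_T)$, where $g(u)=\mathbb E[e^{\gamma\phi_L(\bar\mu_T+\sigma_*\Delta B^H_{u/t_0})}]$. This step is the freezing lemma for independent variables.

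Second, I will treat $s=0$, which corresponds to $\tau_T=0$. Under the strict Feller condition this event has probability zero, but the statement is claimed without that hypothesis. I will argue directly: when $\tau_T=0$, $I_T=\bar\mu_T$ almost surely given $\mathcal F_T^v$, so the conditional expectation equals $e^{\gamma\min\{(K-\bar\mu_T)^+,L\}}=G_L(0)$. I will also check that this value is the continuous extension. As $s\downarrow0$, $I_T$ converges in distribution to the constant $\bar\mu_T$, and $x\mapsto e^{\gamma\phi_L(x)}$ is bounded and continuous because $\phi_L$ is Lipschitz by Proposition~\ref{prop:contract_properties}. Weak convergence then gives $G_L(s)\to G_L(0)$. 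This avoids case-by-case limits of the $\Phi_{\mathcal N}$ terms, which would otherwise require tracking the signs of $a$ and $a-L$.

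Third, the bounds follow from Proposition~\ref{prop:existence}: for every $s\geq0$, $G_L(s)$ is the expectation of a variable with values in $[1,e^{\gamma L}]$. Averaging over $\tau$ keeps $\mathbb E^v[G_L(s_T(\tau))]$ in $[1,e^{\gamma L}]$, so the logarithm is finite and lies in $[0,\gamma L]$. Dividing by $\gamma$ gives Eq.~\eqref{eq:main_premium} together with $0\leq\Pi_\gamma(\Phi_L)\leq L$.

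The main obstacle is measure-theoretic rather than computational. The outer expectation $\mathbb E^v[G_L(s_T(\tau))]$ requires $G_L\circ s_T$ to be measurable. I will obtain this from continuity of $G_L$ on $[0,\infty)$, established in the second step, combined with measurability of $\tau_T$. The conditional-expectation identity also needs a justification. I plan to prove it by checking $\mathbb E[e^{\gamma\Phi_L}\mathbf 1_F]=\mathbb E[G_L(s_T)\mathbf 1_F]$ for $F\in\mathcal F_T^v$, using Fubini on the product law of $(B^H,v)$, which is legitimate because the integrand is bounded.
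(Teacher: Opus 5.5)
Your proposal is correct and follows the paper's own proof. You condition on $\mathcal F_T^v$, obtain $G_L$ from the conditional Gaussian law of $I_T$, apply the tower property, and read the bounds off $0\leq\Phi_L\leq L$; the only difference is that you cite Lemma~\ref{lem:capped_normal}, whereas the paper redoes the three-region completing-the-square computation inline. Your weak-convergence proof that $G_L(s)\to G_L(0)$ and your freezing-lemma justification are useful refinements of steps the paper only asserts, and the Feller condition is not needed for $\{\tau_T=0\}$ to be null: the Gamma-distributed $v_0$ is almost surely positive and CIR paths are continuous.
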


\begin{proof}
By Proposition~\ref{prop:index_conditional_distribution}, conditional on
$\mathcal F_T^v$,
\[
    I_T
    =
    \bar\mu_T+sZ,
    \qquad
    Z\sim N(0,1),
\]
where $s=s_T(\tau)$ is measurable with respect to
$\mathcal F_T^v$. Hence,
\[
    \Phi_L
    =
    \min\{(a-sZ)^+,L\}.
\]
When $s>0$, the payoff reaches the policy limit if
$Z\leq d_L$, is strictly between zero and $L$ if
$d_L<Z\leq d_0$, and is zero if $Z>d_0$. Therefore,
\begin{align*}
    \mathbb E\left[
        e^{\gamma\Phi_L}
        \mid\mathcal F_T^v
    \right]
    ={}&
    e^{\gamma L}
    \int_{-\infty}^{d_L}\varphi(z)\,dz\\
    &+
    e^{\gamma a}
    \int_{d_L}^{d_0}
        e^{-\gamma sz}\varphi(z)\,dz
    +
    \int_{d_0}^{\infty}\varphi(z)\,dz .
\end{align*}
The first and third integrals equal
$e^{\gamma L}\Phi_{\mathcal N}(d_L)$ and
$1-\Phi_{\mathcal N}(d_0)$, respectively. For the middle integral,
\[
    -\gamma sz-\frac{z^2}{2}
    =
    -\frac{(z+\gamma s)^2}{2}
    +
    \frac{\gamma^2s^2}{2}.
\]
After the change of variable $y=z+\gamma s$, it follows that
\begin{align*}
    e^{\gamma a}
    \int_{d_L}^{d_0}
        e^{-\gamma sz}\varphi(z)\,dz
    =
    \exp\left(
        \gamma a+\frac{1}{2}\gamma^2s^2
    \right)
    \left[
        \Phi_{\mathcal N}(d_0+\gamma s)
        -
        \Phi_{\mathcal N}(d_L+\gamma s)
    \right].
\end{align*}
Combining the three payoff regions proves
Eq.~\eqref{eq:conditional_kernel}.

The tower property now gives
\[
    \mathbb E[e^{\gamma\Phi_L}]
    =
    \mathbb E^v\left[
        \mathbb E\left[
            e^{\gamma\Phi_L}
            \mid\mathcal F_T^v
        \right]
    \right]
    =
    \mathbb E^v\left[
        G_L\bigl(s_T(\tau)\bigr)
    \right].
\]
Substitution into Eq.~\eqref{eq:entropic_premium} proves
Eq.~\eqref{eq:main_premium}.

If $s=0$, the index is conditionally deterministic and equal to
$\bar\mu_T$, which gives the stated continuous extension. Finally,
$0\leq\Phi_L\leq L$ implies
$1\leq e^{\gamma\Phi_L}\leq e^{\gamma L}$. Taking conditional and then
unconditional expectations preserves these inequalities. Applying the
increasing logarithm and dividing by $\gamma>0$ gives
$0\leq\Pi_\gamma(\Phi_L)\leq L$.
\end{proof}

The same conditional decomposition gives the expected contractual payment.
For $s>0$, define
\begin{align}
    m_L(s)
    ={}&
    L\Phi_{\mathcal N}(d_L)
    +
    a\left[
        \Phi_{\mathcal N}(d_0)
        -
        \Phi_{\mathcal N}(d_L)
    \right]
    \notag\\
    &+
    s\left[
        \varphi(d_0)
        -
        \varphi(d_L)
    \right].
\label{eq:conditional_expected_payoff}
\end{align}
Then
$\mathbb E[\Phi_L\mid\mathcal F_T^v]=m_L(s_T(\tau))$, and the actuarially
fair expected-loss premium is
$\mathbb E[\Phi_L]=\mathbb E^v[m_L(s_T(\tau))]$. The difference
$\Pi_\gamma(\Phi_L)-\mathbb E[\Phi_L]$ is the entropic risk loading for
bearing the non-hedgeable index liability. Jensen's inequality gives
\begin{equation}\label{eq:risk_loading_bound}
    \mathbb E[\Phi_L]
    \leq
    \Pi_\gamma(\Phi_L)
    \leq
    L,
\end{equation}
and the lower inequality is strict whenever $\Phi_L$ is non-degenerate and
$\gamma>0$.

\begin{proposition}[Dependence on absolute risk aversion]
\label{prop:risk_aversion}
If $\Phi_L$ is non-degenerate, then
$\Pi_\gamma(\Phi_L)$ is strictly increasing in $\gamma>0$. Moreover,
as $\gamma\downarrow0$,
\[
    \Pi_\gamma(\Phi_L)
    =
    \mathbb E[\Phi_L]
    +
    \frac{\gamma}{2}
    \operatorname{Var}(\Phi_L)
    +
    O(\gamma^2).
\]
In particular,
$\lim_{\gamma\downarrow0}\Pi_\gamma(\Phi_L)=\mathbb E[\Phi_L]$.
\end{proposition}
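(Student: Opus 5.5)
The plan is to work with the cumulant generating function $\Lambda(\gamma)=\log\mathbb E[e^{\gamma\Phi_L}]$. Since $0\le\Phi_L\le L$, Theorem~\ref{thm:premium} and Proposition~\ref{prop:contract_properties} show that $\Lambda$ is finite for every real $\gamma$. Dominated convergence then allows differentiation under the expectation to any order, because every derivative of $e^{\gamma\Phi_L}$ is bounded by $L^ke^{|\gamma|L}$. Hence $\Lambda$ is real-analytic (at least $C^\infty$) on $\mathbb R$, with $\Lambda(0)=0$, $\Lambda'(0)=\mathbb E[\Phi_L]$ and $\Lambda''(0)=\operatorname{Var}(\Phi_L)$. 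More generally, $\Lambda''(\gamma)=\operatorname{Var}_{Q_\gamma}(\Phi_L)$ under the tilted measure $dQ_\gamma/d\mathbb P=e^{\gamma\Phi_L}/\mathbb E[e^{\gamma\Phi_L}]$.

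\emph{Strict monotonicity.} Write $\Pi_\gamma=\Lambda(\gamma)/\gamma$. Then
\[
\frac{d}{d\gamma}\Pi_\gamma=\frac{\gamma\Lambda'(\gamma)-\Lambda(\gamma)}{\gamma^2}.
\]
Set $h(\gamma)=\gamma\Lambda'(\gamma)-\Lambda(\gamma)$. We have $h(0)=0$ and $h'(\gamma)=\gamma\Lambda''(\gamma)$. Since $Q_\gamma$ is equivalent to $\mathbb P$, $\Phi_L$ is non-degenerate under $Q_\gamma$ whenever it is non-degenerate under $\mathbb P$. Therefore $\Lambda''(\gamma)>0$, so $h'>0$ on $(0,\infty)$, which gives $h>0$ there and hence $d\Pi_\gamma/d\gamma>0$. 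As an alternative route, the relative-entropy representation $h(\gamma)=H(Q_\gamma\,|\,\mathbb P)$ gives the same conclusion, since $Q_\gamma\neq\mathbb P$ exactly when $\Phi_L$ is non-degenerate.

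\emph{Small-$\gamma$ expansion.} Taylor's theorem with Lagrange remainder on $\Lambda$ around $0$ gives
\[
\Lambda(\gamma)=\gamma\,\mathbb E[\Phi_L]+\frac{\gamma^2}{2}\operatorname{Var}(\Phi_L)+\frac{\gamma^3}{6}\Lambda'''(\xi),\qquad \xi\in(0,\gamma).
\]
I will bound $\Lambda'''$ uniformly on $[0,1]$. It equals the third central moment of $\Phi_L$ under $Q_\xi$, and its absolute value is at most $L^3$ because $\Phi_L\in[0,L]$. Dividing by $\gamma$ yields the stated expansion with an $O(\gamma^2)$ remainder bounded by $L^3\gamma^2/6$. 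Letting $\gamma\downarrow0$ gives the limit $\mathbb E[\Phi_L]$.

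The main obstacle is the rigorous justification of differentiation under the expectation and the positivity of the tilted variance. Both reduce to the boundedness of the payoff, which is already provided by Proposition~\ref{prop:contract_properties}. The rest is routine.
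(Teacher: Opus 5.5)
Your proposal is correct and follows the paper's proof essentially step for step: the same cumulant generating function, the same auxiliary function $h(\gamma)=\gamma\Lambda'(\gamma)-\Lambda(\gamma)$ with $h(0)=0$ and $h'(\gamma)=\gamma\operatorname{Var}_{Q_\gamma}(\Phi_L)>0$, and the same second-order expansion of $\Lambda$ at zero divided by $\gamma$. Two of your details tighten points the paper only asserts: the equivalence $Q_\gamma\sim\mathbb P$, which carries non-degeneracy over to the tilted measure, and the explicit Lagrange-remainder bound $|\Lambda'''|\le L^3$, which makes the $O(\gamma^2)$ term at most $L^3\gamma^2/6$.
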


\begin{proof}
Because $0\leq\Phi_L\leq L$, its moment-generating function is finite for
every real argument. Define
\[
    C(\gamma)
    =
    \log\mathbb E[e^{\gamma\Phi_L}],
    \qquad
    \Pi_\gamma(\Phi_L)
    =
    \frac{C(\gamma)}{\gamma}.
\]
Differentiation under the expectation is justified by boundedness. It gives
\[
    C'(\gamma)
    =
    \frac{
        \mathbb E[\Phi_Le^{\gamma\Phi_L}]
    }{
        \mathbb E[e^{\gamma\Phi_L}]
    }
    =
    \mathbb E_\gamma[\Phi_L],
\]
where $\mathbb E_\gamma$ denotes expectation under the exponentially tilted
probability measure. A second differentiation gives
$C''(\gamma)=\operatorname{Var}_\gamma(\Phi_L)$. This quantity is strictly
positive when the payoff is non-degenerate, so $C$ is strictly convex.

Differentiating $C(\gamma)/\gamma$ yields
\[
    \frac{d}{d\gamma}\Pi_\gamma(\Phi_L)
    =
    \frac{\gamma C'(\gamma)-C(\gamma)}{\gamma^2}.
\]
Let $h(\gamma)=\gamma C'(\gamma)-C(\gamma)$. Since $C(0)=0$,
$h(0)=0$, while
\[
    h'(\gamma)
    =
    \gamma C''(\gamma)
    =
    \gamma\operatorname{Var}_\gamma(\Phi_L)
    >
    0
\]
for $\gamma>0$. Hence, $h(\gamma)>0$ and the premium is strictly
increasing in absolute risk aversion.

Finally, the cumulant expansion of $C$ at zero is
\[
    C(\gamma)
    =
    \gamma\mathbb E[\Phi_L]
    +
    \frac{\gamma^2}{2}\operatorname{Var}(\Phi_L)
    +
    O(\gamma^3).
\]
Division by $\gamma$ proves the expansion and the limiting result.
\end{proof}

Although the capped payoff is the contractual specification used throughout
the numerical analysis, the uncapped payoff provides a useful theoretical
benchmark for studying the effect of conditional index volatility.

\begin{corollary}[Uncapped conditional benchmark]\label{cor:uncapped}
For a fixed realization of the time-change process, the conditional
exponential moment of the uncapped payoff
$\Phi_\infty=(K-I_T)^+$ is
\begin{equation}\label{eq:uncapped_kernel}
    G_\infty(s)
    =
    1-\Phi_{\mathcal N}\left(\frac{a}{s}\right)
    +
    \exp\left(
        \gamma a+\frac{1}{2}\gamma^2s^2
    \right)
    \Phi_{\mathcal N}\left(
        \frac{a}{s}+\gamma s
    \right).
\end{equation}
An unconditional uncapped entropic premium exists only if
$\mathbb E^v[G_\infty(s_T(\tau))]<\infty$.
\end{corollary}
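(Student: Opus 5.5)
The plan is to obtain the corollary as the $L\to\infty$ specialization of the conditional computation in the proof of Theorem~\ref{thm:premium}. The existence claim then follows from the tower property together with the definition of the entropic premium.

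First, fix a realization of the time change and take $s=s_T(\tau)>0$. By Proposition~\ref{prop:index_conditional_distribution}, $I_T=\bar\mu_T+sZ$ with $Z\sim N(0,1)$ independent of $\mathcal F_T^v$, so
\[
\Phi_\infty=(a-sZ)^+ .
\]
Split the expectation at $Z=d_0=a/s$. On $\{Z>d_0\}$ the payoff vanishes, which contributes $1-\Phi_{\mathcal N}(a/s)$. On $\{Z\le d_0\}$ the integrand is $e^{\gamma a}e^{-\gamma s z}\varphi(z)$. Completing the square exactly as in the proof of Theorem~\ref{thm:premium}, i.e.\ $e^{-\gamma sz}\varphi(z)=e^{\gamma^2s^2/2}\varphi(z+\gamma s)$, and substituting $y=z+\gamma s$, this region contributes
\[
\exp\left(\gamma a+\tfrac12\gamma^2s^2\right)\Phi_{\mathcal N}(a/s+\gamma s).
\]
Adding the two pieces gives Eq.~\eqref{eq:uncapped_kernel}.

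Equivalently, one can let $L\to\infty$ in Eq.~\eqref{eq:conditional_kernel}. The term $e^{\gamma L}\Phi_{\mathcal N}((a-L)/s)$ tends to zero by the Gaussian tail bound, and $\Phi_{\mathcal N}((a-L)/s+\gamma s)\to0$. Monotone convergence in $L$ ($e^{\gamma\Phi_L}\uparrow e^{\gamma\Phi_\infty}$) confirms that the limit is the conditional exponential moment. Finiteness of $G_\infty(s)$ for each fixed $s$ is immediate from the formula. The degenerate case $s=0$ is handled by continuity and gives $e^{\gamma a^+}$.

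For the existence statement, the argument is that $e^{\gamma\Phi_\infty}$ is non-negative. The tower property, valid in $[0,\infty]$ for non-negative variables without any integrability assumption, gives
\[
\mathbb E[e^{\gamma\Phi_\infty}]=\mathbb E^v[G_\infty(s_T(\tau))].
\]
By Definition~\ref{def:entropic}, the premium exists if and only if this quantity is finite. The corollary states only the necessity direction, and the identity shows that it is in fact an equivalence.

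The only real subtlety is the remark following Proposition~\ref{prop:contract_properties}. Because $G_\infty(s)$ grows like $e^{\gamma^2s^2/2}$ while $\tau_T$ has unbounded support, finiteness cannot be taken for granted. This justifies stating existence conditionally instead of proving it outright. No further obstacle is expected, since the computation is a direct specialization of the earlier ones.
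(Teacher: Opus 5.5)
Your proposal is correct and follows the paper's proof. Splitting the conditional expectation at $Z=a/s$ and completing the square is exactly the paper's argument, your $L\to\infty$ monotone-convergence derivation is a valid extra check, and your use of the tower property in $[0,\infty]$ for the non-negative $e^{\gamma\Phi_\infty}$ sharpens the paper's informal existence remark by showing $\mathbb E^v[G_\infty(s_T(\tau))]<\infty$ is both necessary and sufficient.
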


\begin{proof}
Conditional on the time change, the uncapped payoff is positive when
$Z\leq a/s$. Consequently,
\begin{align*}
    \mathbb E\left[
        e^{\gamma\Phi_\infty}
        \mid\mathcal F_T^v
    \right]
    &=
    \int_{a/s}^{\infty}\varphi(z)\,dz
    +
    e^{\gamma a}
    \int_{-\infty}^{a/s}
        e^{-\gamma sz}\varphi(z)\,dz\\
    &=
    1-\Phi_{\mathcal N}\left(\frac{a}{s}\right)
    +
    \exp\left(
        \gamma a+\frac{1}{2}\gamma^2s^2
    \right)
    \Phi_{\mathcal N}\left(
        \frac{a}{s}+\gamma s
    \right),
\end{align*}
where the second equality follows by completing the square. This proves
Eq.~\eqref{eq:uncapped_kernel}. Conditional finiteness for each fixed $s$
does not imply finiteness after averaging over the unbounded random variable
$\tau_T$, which explains the additional integrability requirement.
\end{proof}

\begin{proposition}[Conditional volatility monotonicity for the uncapped
benchmark]\label{prop:volatility_monotonicity}
Let $Z\sim N(0,1)$, $a\in\mathbb R$, and $\gamma>0$. The function
\[
    g(s)
    =
    \mathbb E\left[
        e^{\gamma(a-sZ)^+}
    \right],
    \qquad s>0,
\]
is strictly increasing in $s$. Hence, for every fixed realization of the
time-change process, the conditional uncapped entropic premium is strictly
increasing in the conditional index volatility $s=s_T(\tau)$.
\end{proposition}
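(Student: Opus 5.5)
The plan is to differentiate under the expectation and show that the derivative is strictly positive. For fixed $z$, the map $s\mapsto e^{\gamma(a-sz)^+}$ is absolutely continuous. Its almost-everywhere derivative is
\[
-\gamma z\,e^{\gamma(a-sz)}\mathbf 1_{\{a-sz>0\}}.
\]
On a compact interval $s\in[s_1,s_2]\subset(0,\infty)$ this derivative is dominated by
\[
\gamma|z|\,e^{\gamma|a|+\gamma s_2|z|},
\]
which is integrable against $\varphi$. Hence
\[
g'(s)=-\gamma e^{\gamma a}\,\mathbb E\bigl[Z e^{-\gamma sZ}\mathbf 1_{\{Z<a/s\}}\bigr].
\]
Alternatively, one can differentiate the closed form $G_\infty$ of Corollary~\ref{cor:uncapped} directly.

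The key step is to compute this expectation explicitly. Writing $d=a/s$ and using $e^{-\gamma sz}\varphi(z)=e^{\gamma^2s^2/2}\varphi(z+\gamma s)$, as in the proof of Lemma~\ref{lem:capped_normal}, substitute $y=z+\gamma s$ to get
\[
\int_{-\infty}^{d} z\,e^{-\gamma sz}\varphi(z)\,dz
=e^{\gamma^2s^2/2}\Bigl[-\varphi(d+\gamma s)-\gamma s\,\Phi_{\mathcal N}(d+\gamma s)\Bigr].
\]
Therefore
\[
g'(s)=\gamma e^{\gamma a+\gamma^2s^2/2}\Bigl[\varphi(d+\gamma s)+\gamma s\,\Phi_{\mathcal N}(d+\gamma s)\Bigr].
\]
This is strictly positive for every $s>0$, since $\varphi>0$ and $\gamma s\,\Phi_{\mathcal N}>0$. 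Strict monotonicity of $g$ on $(0,\infty)$ follows.

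Two cross-checks are available. If one differentiates $G_\infty$ directly instead, the boundary terms in $\varphi(a/s)$ cancel, because $e^{\gamma a+\gamma^2s^2/2}\varphi(a/s+\gamma s)=\varphi(a/s)$. The result is the same expression, which guards against sign errors. A more conceptual argument, that $x\mapsto e^{\gamma(a-x)^+}$ is convex so increasing the Gaussian scale raises the expectation in convex order, gives only weak monotonicity. I will therefore rely on the explicit derivative for strictness.

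Finally, conditional on $\mathcal F_T^v$, Proposition~\ref{prop:index_conditional_distribution} gives $I_T=\bar\mu_T+sZ$, so $\mathbb E[e^{\gamma\Phi_\infty}\mid\mathcal F_T^v]=g(s)$ with $s=s_T(\tau)$. The conditional premium is $\gamma^{-1}\log g(s)$, which is a strictly increasing transform of $g$, so it inherits strict monotonicity in $s$.

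The only delicate point is justifying the interchange of derivative and expectation, given the kink of the positive part at $z=a/s$. The derivative exists for every $z\neq a/s$, a null set, and the domination above handles the rest. Alternatively, the closed-form route avoids the issue entirely.
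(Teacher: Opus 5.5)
Your proof is correct and follows essentially the same route as the paper: differentiate $g$ under the integral, reduce $g'(s)$ to $-\gamma e^{\gamma a}\int_{-\infty}^{a/s} z\,e^{-\gamma s z}\varphi(z)\,dz$, and evaluate this integral in closed form as $-e^{\gamma^2 s^2/2}\bigl[\varphi(a/s+\gamma s)+\gamma s\,\Phi_{\mathcal N}(a/s+\gamma s)\bigr]<0$. The differences are only cosmetic. You differentiate the kinked integrand pathwise, so no boundary terms arise, whereas the paper uses the Leibniz rule and the boundary terms cancel. You also evaluate the integral by the Gaussian shift, whereas the paper uses integration by parts.
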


\begin{proof}
The exponential moment can be written as
\[
    g(s)
    =
    1-\Phi_{\mathcal N}\left(\frac{a}{s}\right)
    +
    e^{\gamma a}
    \int_{-\infty}^{a/s}
        e^{-\gamma sz}\varphi(z)\,dz.
\]
When this expression is differentiated with respect to $s$, the boundary
terms generated by the two components cancel. Differentiation under the
integral sign is justified on every compact subset of $s>0$ by dominated
convergence, because a Gaussian density multiplied by a linear factor and a
bounded exponential tilt remains integrable. Thus,
\[
    g'(s)
    =
    -\gamma e^{\gamma a}
    \int_{-\infty}^{a/s}
        z e^{-\gamma sz}\varphi(z)\,dz.
\]
Set $b=a/s$ and $c=\gamma s>0$. Since
$z\varphi(z)=-\varphi'(z)$, integration by parts gives
\begin{align*}
    \int_{-\infty}^{b}
        z e^{-cz}\varphi(z)\,dz
    &=
    -e^{-cb}\varphi(b)
    -
    c\int_{-\infty}^{b}
        e^{-cz}\varphi(z)\,dz\\
    &=
    -e^{c^2/2}
    \left[
        \varphi(b+c)
        +
        c\Phi_{\mathcal N}(b+c)
    \right]
    <0.
\end{align*}
It follows that $g'(s)>0$ for every $s>0$. Since the logarithm is
strictly increasing and $\gamma>0$, the corresponding conditional
entropic premium is also strictly increasing in $s$.
\end{proof}

\begin{remark}[Scope of the Hurst-parameter comparison]
\label{rem:hurst_scope}
Proposition~\ref{prop:volatility_monotonicity} establishes a model-free
ordering with respect to conditional volatility for the uncapped benchmark;
it does not establish a universal premium ordering with respect to $H$.
Indeed, for a fixed time-change realization,
\[
    \frac{\partial s_T(\tau)}{\partial H}
    =
    s_T(\tau)
    \log\left(\frac{\tau_T}{t_0}\right).
\]
The sign is positive when $\tau_T>t_0$, negative when
$\tau_T<t_0$, and zero when $\tau_T=t_0$. Moreover, changing $H$ may
also require recalibration of $\sigma_*$, the strike, or both. For the
capped payoff, which is not globally convex as a function of the index, no
pathwise or unconditional monotonicity assertion is therefore made without
explicit parameter restrictions. Numerical comparisons must state whether
the persistent scale and strike are held fixed or recalibrated.
\end{remark}

The outer expectation in Eq.~\eqref{eq:main_premium} can be estimated from
independent realizations of the accumulated stochastic time change. Let
\[
    Y_n
    =
    G_L\bigl(s_T(\tau^{(n)})\bigr),
    \qquad
    \overline Y_N
    =
    \frac{1}{N}\sum_{n=1}^{N}Y_n.
\]
The semi-analytical Monte Carlo estimator is
\begin{equation}\label{eq:mc_estimator}
    \widehat{\Pi}_{\gamma,N}
    =
    \frac{1}{\gamma}\log\overline Y_N.
\end{equation}

\begin{proposition}[Consistency and Monte Carlo uncertainty]
\label{prop:mc_asymptotics}
Let
$\mu_Y=\mathbb E^v[Y_1]$ and
$\sigma_Y^2=\operatorname{Var}^v(Y_1)$. Then
$\widehat{\Pi}_{\gamma,N}$ converges almost surely to
$\Pi_\gamma(\Phi_L)$. Moreover,
\begin{equation}\label{eq:mc_clt}
    \sqrt{N}
    \left(
        \widehat{\Pi}_{\gamma,N}
        -
        \Pi_\gamma(\Phi_L)
    \right)
    \xrightarrow{d}
    N\left(
        0,\,
        \frac{\sigma_Y^2}
        {\gamma^2\mu_Y^2}
    \right).
\end{equation}
If
\[
    S_Y^2
    =
    \frac{1}{N-1}
    \sum_{n=1}^{N}
        (Y_n-\overline Y_N)^2,
\]
a consistent standard-error estimator is
\begin{equation}\label{eq:mc_standard_error}
    \widehat{\operatorname{se}}
    \left(
        \widehat{\Pi}_{\gamma,N}
    \right)
    =
    \frac{S_Y}
    {\gamma\sqrt{N}\,\overline Y_N}.
\end{equation}
The leading finite-sample bias is
\[
    \mathbb E[
        \widehat{\Pi}_{\gamma,N}
    ]
    -
    \Pi_\gamma(\Phi_L)
    =
    -
    \frac{\sigma_Y^2}
    {2\gamma N\mu_Y^2}
    +
    O(N^{-2}).
\]
\end{proposition}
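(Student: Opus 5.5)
The plan is to apply standard i.i.d. limit theory to the bounded variables $Y_n$ and then transfer the results through the smooth map $x\mapsto\gamma^{-1}\log x$. By Theorem~\ref{thm:premium}, each $Y_n=G_L(s_T(\tau^{(n)}))$ satisfies $1\le Y_n\le e^{\gamma L}$. The $Y_n$ are i.i.d., since they come from independent realizations of the accumulated time change. Boundedness gives finite moments of every order, $\mu_Y\in[1,e^{\gamma L}]$, and $\sigma_Y^2<\infty$. Moreover, $\Pi_\gamma(\Phi_L)=\gamma^{-1}\log\mu_Y$ by Eq.~\eqref{eq:main_premium}.

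First, consistency. The strong law of large numbers gives $\overline Y_N\to\mu_Y$ almost surely. Since $g(x)=\gamma^{-1}\log x$ is continuous on $[1,\infty)$ and $\overline Y_N\ge1$ always, the continuous mapping theorem yields $\widehat\Pi_{\gamma,N}\to\Pi_\gamma(\Phi_L)$ almost surely.

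Second, the CLT. The classical CLT gives $\sqrt N(\overline Y_N-\mu_Y)\xrightarrow{d}N(0,\sigma_Y^2)$. The delta method with $g'(\mu_Y)=1/(\gamma\mu_Y)$ then gives Eq.~\eqref{eq:mc_clt}. If $\sigma_Y^2=0$, the limit is degenerate and the statement still holds.

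Third, the standard error. We have $S_Y^2\to\sigma_Y^2$ almost surely by the SLLN applied to $Y_n$ and $Y_n^2$, and $\overline Y_N\to\mu_Y$. Continuity then gives $\sqrt N\,\widehat{\operatorname{se}}\to\sigma_Y/(\gamma\mu_Y)$, which is the asymptotic standard deviation. By Slutsky's lemma, this establishes consistency of Eq.~\eqref{eq:mc_standard_error}.

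The bias expansion is the main obstacle, because it requires controlling a Taylor remainder in expectation rather than merely in distribution. I would expand $\log\overline Y_N$ around $\mu_Y$ to fourth order. Writing $D=\overline Y_N-\mu_Y$,
\[
\log\overline Y_N=\log\mu_Y+\frac{D}{\mu_Y}-\frac{D^2}{2\mu_Y^2}+\frac{D^3}{3\mu_Y^3}+R_N,
\qquad |R_N|\le C D^4 .
\]
The constant $C$ is uniform because $\overline Y_N$ and $\mu_Y$ both lie in the compact interval $[1,e^{\gamma L}]$, on which $\log$ has bounded derivatives; this is exactly where the cap $L$ is used. The needed moments of $D$ for i.i.d. bounded variables are
\[
\mathbb E[D]=0,\qquad \mathbb E[D^2]=\sigma_Y^2/N,\qquad \mathbb E[D^3]=\mu_3/N^2,\qquad \mathbb E[D^4]=O(N^{-2}),
\]
where $\mu_3$ is the third central moment of $Y_1$. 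Taking expectations and dividing by $\gamma$ gives
\[
\mathbb E[\widehat\Pi_{\gamma,N}]-\Pi_\gamma(\Phi_L)=-\frac{\sigma_Y^2}{2\gamma N\mu_Y^2}+O(N^{-2}),
\]
which is the stated leading bias.
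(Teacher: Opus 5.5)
Your proposal is correct and follows the same route as the paper. Consistency uses the strong law plus continuity of the logarithm. The CLT uses the classical CLT with the delta method and derivative $1/(\gamma\mu_Y)$. The standard error uses plug-in sample moments, and the bias uses a Taylor expansion of $\log\overline Y_N$ about $\mu_Y$.

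Your bias step is more careful than the paper's. The paper stops at second order and simply asserts $\mathbb E[R_N]=O(N^{-2})$, but a crude cubic remainder bound would only give $O(N^{-3/2})$. Your explicit cubic term, with $\mathbb E[D^3]=\mu_3/N^2$, and your quartic remainder $|R_N|\le D^4/4$ close that gap.

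One small correction to your remark that this is "exactly where the cap is used". The uniform remainder constant comes from the lower bound $\overline Y_N,\mu_Y\ge1$, which reflects nonnegativity of the payoff. The cap is what supplies the finite moments of $Y_1$ needed for $\mathbb E[D^4]=O(N^{-2})$.
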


\begin{proof}
The bound $1\leq Y_n\leq e^{\gamma L}$ guarantees finite moments of every
order. By the strong law of large numbers,
$\overline Y_N\to\mu_Y$ almost surely. Continuity of the logarithm on
$(0,\infty)$ then gives
\[
    \widehat{\Pi}_{\gamma,N}
    \longrightarrow
    \frac{1}{\gamma}\log\mu_Y
    =
    \Pi_\gamma(\Phi_L)
\]
almost surely.

The classical central limit theorem gives
\[
    \sqrt{N}(\overline Y_N-\mu_Y)
    \xrightarrow{d}
    N(0,\sigma_Y^2).
\]
Applying the delta method to
$f(y)=\gamma^{-1}\log y$, for which
$f'(\mu_Y)=1/(\gamma\mu_Y)$, proves
Eq.~\eqref{eq:mc_clt}. Replacing $\mu_Y$ and $\sigma_Y$ by their
consistent sample estimators gives Eq.~\eqref{eq:mc_standard_error}.

For the bias expansion, write
$\overline Y_N=\mu_Y+\delta_N$. A Taylor expansion of the logarithm around
$\mu_Y$ gives
\[
    \log(\mu_Y+\delta_N)
    =
    \log\mu_Y
    +
    \frac{\delta_N}{\mu_Y}
    -
    \frac{\delta_N^2}{2\mu_Y^2}
    +
    R_N.
\]
Since the $Y_n$ are bounded,
$\mathbb E[\delta_N]=0$,
$\mathbb E[\delta_N^2]=\sigma_Y^2/N$, and
$\mathbb E[R_N]=O(N^{-2})$. Taking expectations and dividing by
$\gamma$ proves the stated result.
\end{proof}

When $\tau_T$ is approximated on a grid with $M$ time steps, the same
central limit theorem and standard-error formula apply to the corresponding
discretized target, denoted by $\Pi_{\gamma,M}$. The Monte Carlo standard
error measures sampling uncertainty around $\Pi_{\gamma,M}$; it does not
include the discretization error
$\Pi_{\gamma,M}-\Pi_\gamma$. These two sources of numerical error must
therefore be assessed separately by increasing $N$ and $M$, respectively.

 Direct evaluation of Eq.~\eqref{eq:conditional_kernel} can lose precision in the Gaussian tails, 
particularly when two cumulative distribution values are close. To ensure 
visual distinction and numerical stability, define the upper-tail 
probability $\bar\Phi_N(x) = 1 - \Phi_N(x)$ and let
$D(x_L, x_0) = \log [\bar\Phi_N(x_0) - \bar\Phi_N(x_L)] , \quad x_L < x_0.$
A stable representation is
\begin{equation}\label{eq:stable_interval_probability}
\mathcal D(x_L,x_0)
=
\begin{cases}
A+\operatorname{log1p}\!\left(-e^{B-A}\right),
    & x_0\leq0,\\[1mm]
C+\operatorname{log1p}\!\left(-e^{D-C}\right),
    & x_L\geq0,\\[1mm]
\operatorname{log1p}\!\left(
    -\overline\Phi_{\mathcal N}(x_0)
    -\Phi_{\mathcal N}(x_L)
\right),
    & x_L<0<x_0,
\end{cases}
\end{equation}
where
\[
    A=\log\Phi_{\mathcal N}(x_0),\qquad
    B=\log\Phi_{\mathcal N}(x_L),
\]
and
\[
    C=\log\overline\Phi_{\mathcal N}(x_L),\qquad
    D=\log\overline\Phi_{\mathcal N}(x_0).
\]
The first case uses lower-tail probabilities, the second uses upper-tail
probabilities, and the third avoids subtracting two probabilities close to
one.

Set $x_0=d_0+\gamma s$ and $x_L=d_L+\gamma s$. The logarithms of the
three positive components of the conditional kernel are
\[
    \ell_1
    =
    \log\overline\Phi_{\mathcal N}(d_0),
    \qquad
    \ell_2
    =
    \gamma L+\log\Phi_{\mathcal N}(d_L),
\]
and
\[
    \ell_3
    =
    \gamma a+\frac{1}{2}\gamma^2s^2
    +
    \mathcal D(x_L,x_0).
\]
Using
\[
    \operatorname{LSE}(z_1,\ldots,z_m)
    =
    \log\left(
        \sum_{i=1}^m e^{z_i}
    \right),
\]
the conditional log-kernel is evaluated as
\begin{equation}\label{eq:kernel_logsumexp}
    \log G_L(s)
    =
    \operatorname{LSE}(\ell_1,\ell_2,\ell_3).
\end{equation}
If $\ell^{(n)}=\log Y_n$, the estimator in
Eq.~\eqref{eq:mc_estimator} can similarly be evaluated as
\begin{equation}\label{eq:outer_logmeanexp}
    \widehat{\Pi}_{\gamma,N}
    =
    \frac{1}{\gamma}
    \left[
        \operatorname{LSE}
        \left(
            \ell^{(1)},\ldots,\ell^{(N)}
        \right)
        -
        \log N
    \right].
\end{equation}
This formulation avoids direct subtraction of nearly equal Gaussian
probabilities and prevents overflow in the exponential kernel.

Because the conditional scale in Eq.~\eqref{eq:conditional_index_scale}
depends only on $\tau_T$, simulating one accumulated CIR time change over
$M$ steps has cost $\mathcal O(M)$, while evaluation of the conditional
kernel has constant cost. The total cost for $N$ realizations is therefore
$\mathcal O(NM)$. The current CIR state and accumulated time are sufficient
for valuation, so working memory is $\mathcal O(N)$ when paths are updated
in parallel and can be reduced to $\mathcal O(1)$ when they are processed
sequentially. No $Q$-node covariance quadrature or
$\mathcal O(NQ^2)$ covariance calculation is required.
%%%%%%%%%%%%%%%%%%%%%%%%%%%%%%%%%%%%%%%%%%%%%%%%%%%%%%%%%%%
%%%%%%%%%%%%%%%%%%%%%%%%%%%%%%%%%%%%%%%%%%%%%%%%%%%%%%%%%%%
\section{Numerical study}\label{sec:numerical}

This section examines the economic and numerical implications of the
increment-based TC-fBm insurance model. The calibration is empirically
anchored but is deliberately interpreted as a proxy calibration rather than
as a fully identified structural estimate. The analysis first specifies the
unit convention and the stationary initialization of the normalized CIR
time-change rate. It then compares the TC-fBm premium with scale-controlled
Brownian and pure-fBm benchmarks and investigates sensitivity to persistence,
maturity, risk aversion, and the parameters governing the stochastic time
change. Sampling uncertainty and time-discretization error are assessed
separately.

\subsection{Calibration convention}\label{sec:calibration}

The empirical reference uses daily mean temperature observations at Chicago
O'Hare over 1990--2020 after removing three deterministic Fourier seasonal
harmonics. The baseline Hurst estimate is $\widehat H=0.78$. A
300-replication parametric bootstrap of the detrended-fluctuation-analysis
procedure, with each replication containing $11{,}323$ observations from a
fractional-Gaussian-noise model fitted at $H=0.78$, gives the model-based
95\% interval $[0.74,0.82]$. This interval quantifies finite-sample
uncertainty conditional on the fitted long-memory benchmark; it is not
interpreted as a distribution-free confidence interval for the original
temperature residuals.

The squared deseasonalized anomalies provide the annualized CIR proxy inputs
$\widehat\kappa=4.15$, $\widehat\theta=18.2$, and
$\widehat\sigma_\lambda=5.3$. After normalization by the raw long-run mean,
the annualized volatility of the time-change rate is
\[
    \widehat\eta
    =
    \frac{\widehat\sigma_\lambda}
    {\sqrt{\widehat\theta}}
    =
    1.24234.
\]
The daily parameters are therefore
$\kappa_d=0.01137$ and $\eta_d=0.06503$. The strict Feller condition is
satisfied because $2\widehat\kappa=8.30>\widehat\eta^2=1.5434$.

To preserve the stationary-increment structure established in
Section~\ref{sec:preliminaries}, the initial normalized rate is sampled from
its invariant distribution rather than fixed at its mean. In the shape--rate
parameterization,
\[
    v_0
    \sim
    \operatorname{Gamma}(\alpha_v,\alpha_v),
    \qquad
    \alpha_v
    =
    \frac{2\widehat\kappa}{\widehat\eta^2}
    =
    5.3777.
\]
Consequently, $\mathbb E[v_t]=1$ at every date and
$\mathbb E[\tau_T]=T$. For the stationary CIR process, the variance of
accumulated operational time is
\[
    \operatorname{Var}(\tau_T)
    =
    2\operatorname{Var}(v_0)
    \left[
        \frac{T}{\kappa_d}
        -
        \frac{1-e^{-\kappa_dT}}{\kappa_d^2}
    \right].
\]
At $T=90$ days, this identity gives
$\operatorname{SD}(\tau_{90})=33.181$ days. The corresponding simulated
mean and standard deviation are $89.902$ and $33.085$ days, respectively.

The reported daily anomaly standard deviation of
$4.25\,^{\circ}\mathrm C$ is retained only as a descriptive statistic and
is not substituted for the persistent reference amplitude $\sigma_*$. The
latter is identified by matching the lower tail of the 90-day cumulative
index. Under $\bar\mu_{90}=0$, the calibration equation is
\begin{equation}\label{eq:tail_matching_equation}
    \mathbb P(I_{90}\leq K)
    =
    \mathbb E^v\left[
        \Phi_{\mathcal N}\left(
            \frac{K}
            {\sigma_*\Delta(\tau_{90}/t_0)^H}
        \right)
    \right]
    =
    0.05,
\end{equation}
with $K=-28.5\,^{\circ}\mathrm C$-days and $H=0.78$. Solving
Eq.~\eqref{eq:tail_matching_equation} gives
$\sigma_*=0.508566\,^{\circ}\mathrm C$. This value is the persistent
one-reference-day amplitude under unit-rate operational time; it is not the
total standard deviation of observed daily anomalies.

The baseline contract has
\[
    T=90,\qquad
    K=-28.5,\qquad
    \bar\mu_{90}=0,\qquad
    L=30,\qquad
    \gamma=0.12.
\]
The benefit rate is one monetary unit per
 $^{\circ}\mathrm C$-day. Thus, $L=30$ is the maximum normalized payment,
the absolute risk tolerance is $1/\gamma=8.33$ payment units, and
 $\gamma L=3.6$. The value of $\gamma$ is a baseline preference scenario,
not an industry-wide estimate; its effect is examined explicitly below.

\begin{quote}\itshape
At an illustrative benefit rate of 100 monetary units per
 $^{\circ}\mathrm{C}$-day, the baseline contract has a maximum payment of
 $qL=3{,}000$, an expected payment of $q\,\mathbb E[\Phi_L]=46.12$, and an
entropic premium of $q\,\Pi_\gamma(\Phi_L)=170.91$ monetary units.
Retaining the index-unit risk-aversion coefficient $\gamma=0.12$ under this
benefit rate corresponds to a monetary-unit risk-aversion coefficient
 $\gamma_m=\gamma/q=0.0012$ per monetary unit
(Proposition~3.2's scaling relation), with a resulting risk tolerance
 $1/\gamma_m=q/\gamma\approx833.33$ monetary units.
\end{quote}

To validate the calibration beyond the single fitted 90-day target, we perform
a cross-maturity comparison. Using the persistent scale
 $\sigma_*=0.508566\,^{\circ}\mathrm C$ calibrated solely from the 90-day
fifth-percentile tail target, we compute the model-implied fifth-percentile
cumulative-index thresholds at $T=30$ and $T=180$ days under the TC-fBm
framework. These thresholds are then compared against the corresponding
empirical fifth percentiles computed directly from the historical Chicago
O'Hare winter series at those horizons. The empirical percentiles are
constructed using \emph{overlapping} (rolling) windows, advancing the start
date by one day. Because the empirical windows are overlapping, standard
errors on the empirical percentiles computed under an i.i.d.\ assumption are
invalid; accordingly, their sampling variability is assessed using a
moving-block bootstrap with a block length of 90 days to account for the
dependence structure.

At $T=30$ days, the model-implied fifth percentile is
 $-12.0\,^{\circ}\mathrm C$-days, closely matching the empirical fifth
percentile of $-12.2\,^{\circ}\mathrm C$-days (moving-block standard error
 $0.3\,^{\circ}\mathrm C$-days). At $T=180$ days, the model-implied threshold
is $-48.7\,^{\circ}\mathrm C$-days, while the empirical fifth percentile is
 $-51.3\,^{\circ}\mathrm C$-days (moving-block standard error
 $0.4\,^{\circ}\mathrm C$-days). The close agreement at 30 days confirms that
the single-tail calibration does not materially distort the short-maturity
index distribution. The moderate shortfall at 180 days is expected: matching
the cumulative variance at one maturity does not reproduce the exact
long-memory term structure at other horizons, and the empirical tail at longer
maturities incorporates seasonal and structural variations not captured by the
stationary fractional increment model.
\subsection{Simulation design}

Let $t_j=j\delta t$, where $\delta t=T/M$. The normalized CIR rate and
accumulated operational time are approximated using the full-truncation Euler
scheme
\begin{align}
    v_{j+1}
    &=
    v_j
    +
    \kappa_d(1-v_j^+)\delta t
    +
    \eta_d\sqrt{v_j^+}\sqrt{\delta t}\,Z_{j+1},
    \label{eq:full_truncation}\\
    \tau_{j+1}
    &=
    \tau_j+v_j^+\delta t,
    \qquad
    v_j^+=\max\{v_j,0\},
    \qquad
    Z_{j+1}\overset{\mathrm{iid}}{\sim}N(0,1).
    \notag
\end{align}
Only the non-negative truncated state enters the diffusion coefficient and
the operational-time increment. Hence, the discretized time change remains
non-decreasing even if the auxiliary Euler state becomes negative.

The baseline uses four CIR steps per calendar day, giving $M=360$ and
$\delta t=0.25$ day at the 90-day horizon. The number of steps is scaled
proportionally at other maturities. For each realization, the conditional
index scale is obtained directly from
$s_T(\tau)=\sigma_*\Delta(\tau_T/t_0)^H$; no covariance-matrix construction
or double quadrature is required. The premium estimator and its Monte Carlo
standard error are those in Eqs.~\eqref{eq:mc_estimator}
and~\eqref{eq:mc_standard_error}. Unless stated otherwise, results use
$N=100{,}000$ independent time-change realizations and the fixed
pseudorandom-number seed $20260728$.

\subsection{Baseline valuation and model comparison}

Under the baseline calibration, the semi-analytical entropic premium is
$1.7091$, with Monte Carlo standard error $0.0078$. The corresponding
95\% Monte Carlo confidence interval is
$1.7091 \pm 1.96 \times 0.0078 \approx [1.694, 1.724]$. The expected payment
is $0.4612$, so the entropic risk loading equals $1.2478$. The exercise
probability is $5.000\%$, as imposed by the tail-matching equation, while
the probability of exhausting the policy limit is $0.205\%$.

For a comparison model $\mathcal M$, define its relative premium shortfall
against the TC-fBm baseline by
\begin{equation}\label{eq:relative_premium_shortfall}
    S_{\mathcal M}
    =
    100
    \frac{
        \Pi_{\mathrm{TC\text{-}fBm}}
        -
        \Pi_{\mathcal M}
    }{
        \Pi_{\mathrm{TC\text{-}fBm}}
    }.
\end{equation}
A positive value means that model $\mathcal M$ produces a lower premium;
a negative value means that it produces a higher premium. This quantity is a
relative pricing comparison and is not interpreted as a regulatory capital
measure.

Table~\ref{tab:model_comparison} compares four models under different conventions: the
\emph{common-reference-scale Brownian} row holds $\sigma_*$ fixed at its
TC-fBm calibrated value and sets $H=0.5$ (a maturity-scaling diagnostic,
not a calibrated benchmark); the \emph{variance-matched Brownian} row sets
its own amplitude so that the unconditional 90-day cumulative variance
matches TC-fBm exactly; the \emph{pure fBm} rows use the same $H$,
 $\sigma_*$, strike, and policy limit as TC-fBm but with a deterministic
clock; the \emph{TC-fBm} row uses the full stochastic operational time.

\begin{table}[H]
\centering
\footnotesize
\setlength{\tabcolsep}{3pt}
\caption{Model comparison for the fixed 90-day contract. The
variance-matched Brownian benchmark has the same unconditional cumulative
variance as the TC-fBm model at 90 days.}
\label{tab:model_comparison}
\begin{tabular}{lrrrrrr}
\toprule
Model & Premium & MC s.e. & Mean payment & Loading
& Exercise (\%) & Shortfall (\%)\\
\midrule
BM, common reference scale (diagnostic only)
& $<0.001$ & -- & $<0.001$ & $<0.001$ & $<0.001$ & 100.0\\
BM, variance matched at 90 days
& 1.039 & -- & 0.376 & 0.663 & 5.148 & 39.2\\
Pure fBm, $H=0.70$ & 0.055 & -- & 0.032 & 0.023 & 0.816 & 96.8\\
Pure fBm, $H=0.78$ & 0.879 & -- & 0.329 & 0.550 & 4.690 & 48.6\\
TC-fBm, $H=0.78$ & 1.709 & 0.008 & 0.461 & 1.248 & 5.000 & 0.0\\
\bottomrule
\end{tabular}
\end{table}

The common-reference-scale Brownian row holds $\sigma_*$ fixed and is
included only as a maturity-scaling diagnostic. It is not the principal
economic benchmark because Brownian and fractional cumulative variances
scale differently. The more conservative Brownian comparison matches the
unconditional 90-day TC-fBm variance exactly. This requires the Brownian
reference amplitude
$\sigma_{\mathrm{BM}}=1.84230\,^{\circ}\mathrm C$. Despite having the same
cumulative variance, the Brownian benchmark produces a premium of $1.039$
and a relative premium shortfall of $39.2\%$. Variance matching therefore
does not reproduce the scale-mixture distribution generated by random
operational time.

Pure fBm with $H=0.78$ produces a premium of $0.879$. Its relative
premium shortfall is $48.6\%$, while the absolute difference from the
TC-fBm premium is $0.830$ payment units. The stochastic time change raises
the unconditional cumulative variance by $5.6\%$ relative to deterministic
time, but its premium effect is larger because the entropic criterion places
greater weight on high-payment states. This is a contract-specific numerical
finding and is not presented as a universal ordering for capped payoffs.

\begin{figure}[H]
\centering
\includegraphics[width=0.88\textwidth]{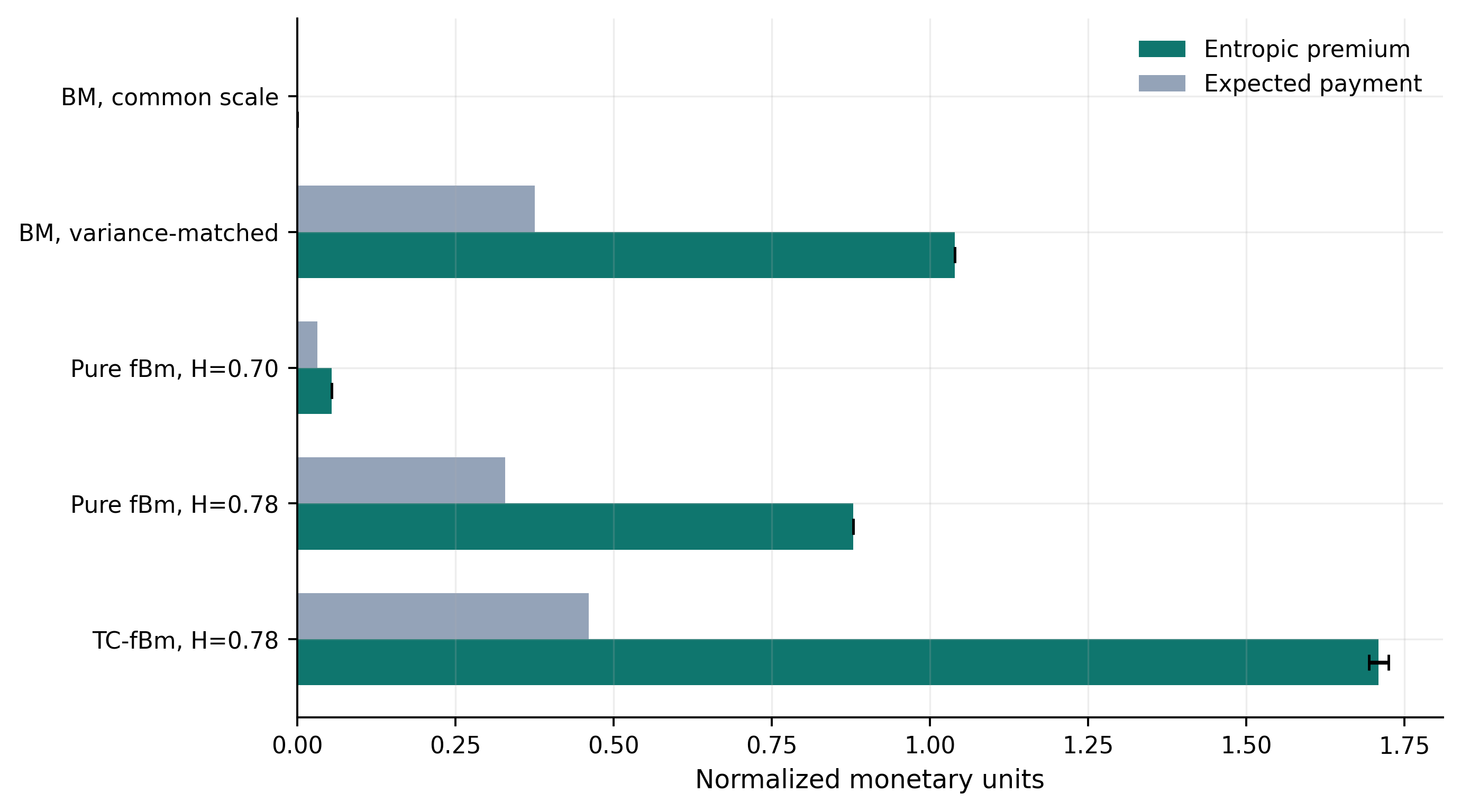}
\caption{Entropic premiums and expected payments for the fixed 90-day
contract. The error bar is the 95\% Monte Carlo interval for the TC-fBm
premium; the deterministic-time benchmarks are evaluated analytically.}
\label{fig:model_comparison}
\end{figure}

\subsection{Persistence, maturity, and risk aversion}

Table~\ref{tab:hurst_sensitivity} revalues the same contract while holding
 $\sigma_*$, $K$, $L$, and $\gamma$ fixed. The table therefore
measures the effect of changing persistence on a fixed contractual payoff,
rather than comparing contracts whose strikes have been recalibrated at each
value of $H$.

\begin{table}[H]
\centering
\small
\setlength{\tabcolsep}{4pt}
\caption{Sensitivity to the Hurst parameter for the fixed 90-day contract.
The observed ordering is numerical and is not asserted as a global
monotonicity theorem.}
\label{tab:hurst_sensitivity}
\begin{tabular}{rrrrrrr}
\toprule
 $H$ & $\mathbb E[s_T]$ & $\operatorname{SD}(s_T)$ & Exercise (\%) & TC premium & Pure-fBm premium & Loading\\
\midrule
0.50 & 4.743  & 0.870 & $<0.001$ & $<0.001$ & $<0.001$ & $<0.001$\\
0.60 & 7.443  & 1.638 & 0.056 & 0.003 & $<0.001$ & 0.001\\
0.70 & 11.694 & 3.002 & 1.275 & 0.199 & 0.055 & 0.123\\
0.75 & 14.666 & 4.035 & 3.244 & 0.856 & 0.351 & 0.600\\
0.78 & 16.803 & 4.808 & 5.000 & 1.709 & 0.879 & 1.248\\
0.82 & 20.148 & 6.063 & 7.952 & 3.505 & 2.370 & 2.619\\
\bottomrule
\end{tabular}
\end{table}

The premium increases with $H$ over the reported grid because both the
mean conditional scale and the exercise probability rise under the fixed
contract. This pattern is consistent with, but is not implied by,
Proposition~\ref{prop:volatility_monotonicity}, which concerns conditional
volatility for the uncapped payoff. Figure~\ref{fig:hurst_gamma} also confirms
the theoretically established increase in the capped entropic premium with
 $\gamma$ for every fixed value of $H$.

Because $H$ is estimated rather than known, the bootstrap distribution is
also propagated through the fixed 90-day contract. For each of the 300
parametric-bootstrap estimates, the premium is recomputed while
 $\sigma_*$, $K$, $L$, $\gamma$, and the CIR parameters are held at
their baseline values. Table~\ref{tab:hurst_uncertainty_propagation} therefore
measures valuation uncertainty attributable to estimation of $H$, conditional
on the fitted fractional-Gaussian-noise benchmark and the remaining
calibration inputs.

\begin{table}[H]
\centering
\small
\setlength{\tabcolsep}{5pt}
\caption{Propagation of parametric-bootstrap uncertainty in the Hurst
parameter through the fixed 90-day contract. All non-$H$ parameters are
held at their baseline values.}
\label{tab:hurst_uncertainty_propagation}
\begin{tabular}{lrrrrr}
\toprule
Bootstrap percentile & $H$ & Premium & Mean payment
& Risk loading & Exercise (\%)\\
\midrule
2.5th  & 0.7387 & 0.638 & 0.200 & 0.438 & 2.697\\
50th   & 0.7805 & 1.728 & 0.466 & 1.262 & 5.035\\
97.5th & 0.8193 & 3.466 & 0.876 & 2.589 & 7.891\\
\bottomrule
\end{tabular}
\end{table}

The resulting model-conditional 95\% percentile range for the premium is
 $[0.638,3.466]$. Its width shows that uncertainty about persistence can
materially affect valuation even when the remaining contract and time-change
parameters are fixed. This range is not a joint confidence interval for the
premium: it excludes uncertainty in $\sigma_*$, the CIR proxy parameters,
the strike calibration, and the data-generating specification.

\begin{figure}[H]
\centering
\includegraphics[width=0.82\textwidth]{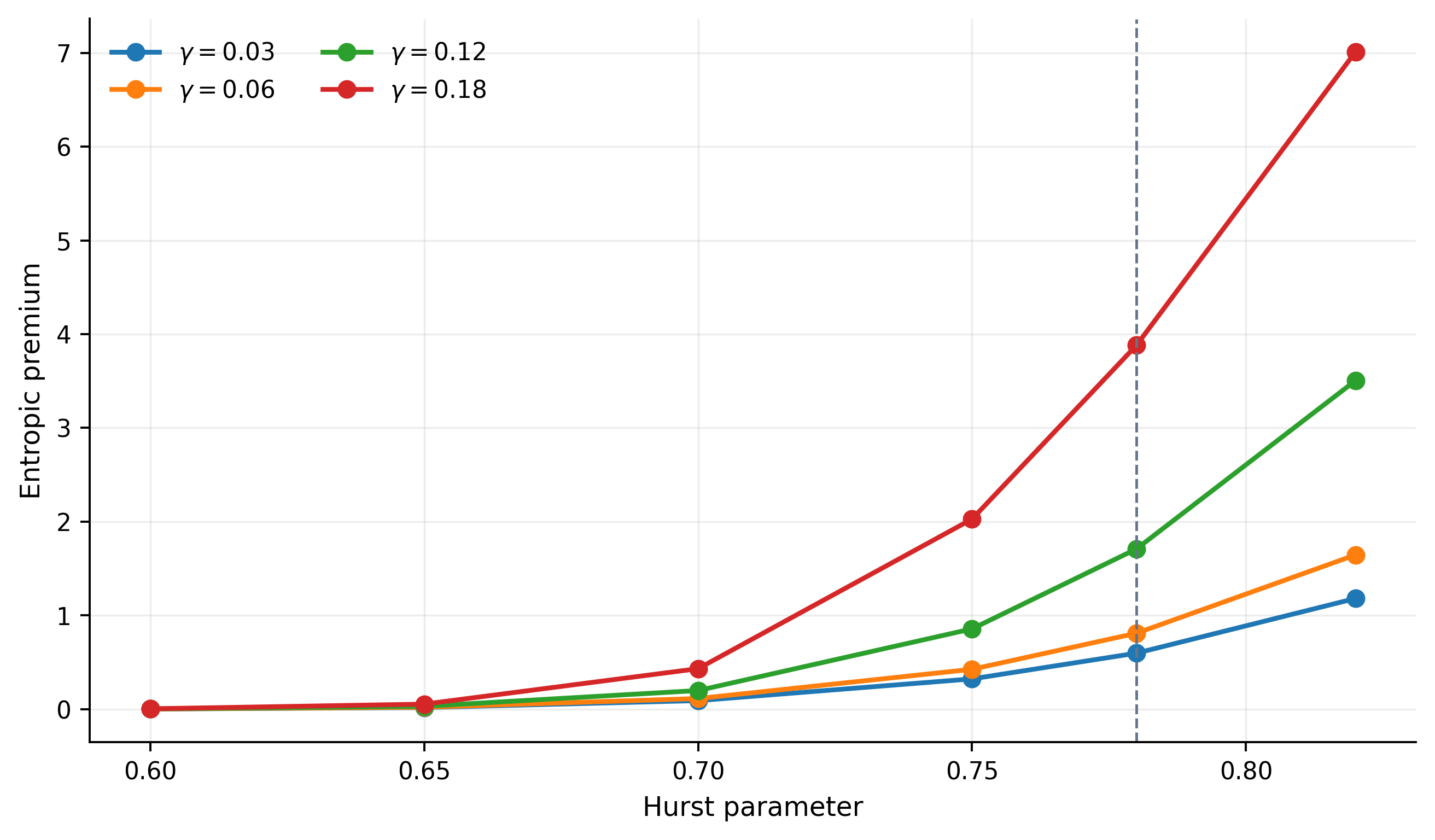}
\caption{Sensitivity of the fixed 90-day capped contract to the Hurst
parameter and absolute risk aversion. The vertical dashed line marks the
baseline $H=0.78$.}
\label{fig:hurst_gamma}
\end{figure}

The policy limit is both a contractual design variable and the condition that
ensures bounded liability. Figure~\ref{fig:cap_sensitivity} varies
 $L\in[5,60]$ while holding the strike, the persistent scale, and the
time-change specification fixed. The left panel reports premiums for three
risk-aversion scenarios, whereas the right panel reports the unconditional
probability that the uncapped loss $(K-I_T)^+$ reaches the policy limit.

\begin{figure}[H]
\centering
\includegraphics[width=0.96\textwidth]{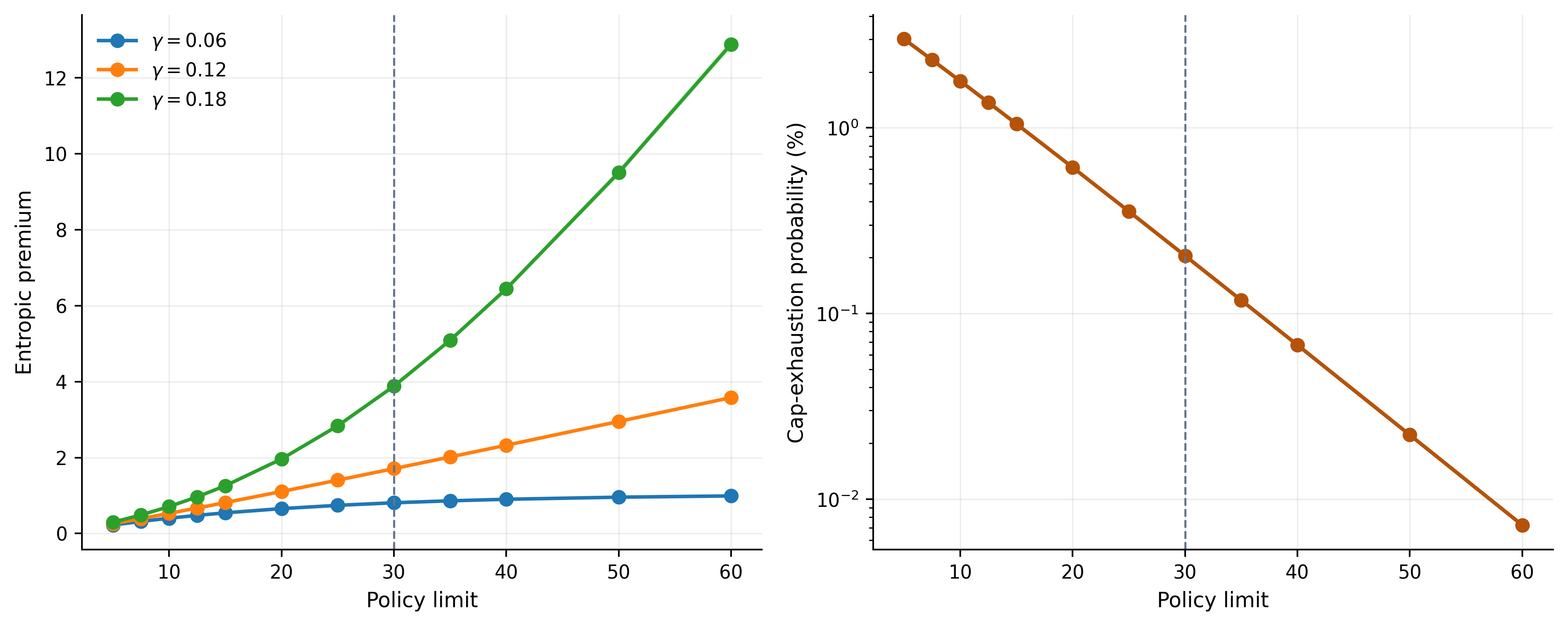}
\caption{Sensitivity to the policy limit for the fixed 90-day contract.
The shaded regions in the left panel are 95\% Monte Carlo intervals. The
right panel uses a logarithmic vertical scale. Vertical dashed lines mark
the baseline limit $L=30$.}
\label{fig:cap_sensitivity}
\end{figure}

At the baseline limit, the premiums are $0.810$, $1.709$, and $3.881$ for $\gamma=0.06$, $0.12$, and $0.18$, respectively, and the
cap-exhaustion probability is $0.205\%$. Across the reported range, this
probability decreases from $3.012\%$ at $L=5$ to $0.007\%$ at
 $L=60$, while the premium becomes increasingly sensitive to the policy
limit as risk aversion rises. These results quantify the economic cost of
expanding coverage without relying on an unconditional uncapped limit, whose
existence would require additional integrability conditions.

For the maturity comparison, the strike is recomputed as a specified
percentile of the TC-fBm index distribution at each horizon. The persistent
scale and the policy limit remain fixed. The Brownian reference amplitude is
the value matched to TC-fBm variance at 90 days and is then held fixed across
all maturities.

\begin{table}[H]
\centering
\small
\caption{Sensitivity to maturity and trigger probability. Negative
shortfall means that the comparison model produces a higher premium than
TC-fBm. Absolute differences are calculated as
 $\Pi_{\mathcal M} - \Pi_{\mathrm{TC\text{-}fBm}}$ in payment units.}
\label{tab:maturity_strike}
\begin{tabular}{rrrrrrrr}
\toprule
\(T\) & Percentile & \(K_T\) & TC premium
& fBm shortfall (\%) & fBm diff. & BM shortfall (\%) & BM diff.\\
\midrule
30  & 5th  & -12.178 & 0.406 & 51.9  & $-0.211$ & $-148.0$ & $+0.601$\\
90  & 5th  & -28.500 & 1.709 & 48.6  & $-0.831$ & 39.2     & $-0.670$\\
180 & 5th  & -48.662 & 3.225 & 28.4  & $-0.916$ & 74.6     & $-2.406$\\
90  & 10th & -21.329 & 3.295 & 28.5  & $-0.939$ & 19.6     & $-0.646$\\
90  & 1st  & -43.966 & 0.346 & 83.2  & $-0.288$ & 77.7     & $-0.269$\\
\bottomrule
\end{tabular}
\end{table}

For the 30-day fifth-percentile contract, the variance-matched Brownian model produces a premium above the TC-fBm benchmark; however, it produces an entropic premium $39.2\%$ below the TC-fBm benchmark at 90 days and $74.6\%$ below at 180 days. The relative Brownian shortfall changes from $-148.0\%$ at 30 days to $74.6\%$ at 180 days. This sign reversal is economically important: matching cumulative variance at one maturity does not reproduce the long-memory term structure at other horizons. The result also illustrates why the Brownian comparison must specify the maturity at which its variance is matched.

The practical stakes of this comparison are highest for perils and contract
structures in which persistence, not just volatility, drives tail risk:
cumulative frost, drought, or degree-day contracts written over multi-week
to multi-month accumulation windows, where consecutive moderate anomalies
can compound into a severe cumulative deficit. In these settings, Table~1
and Table~4 show that a variance-matched Brownian model can misprice
materially and in either direction depending on maturity, so relying on it
is not a conservative simplification but an unpredictable one. Conversely,
for perils dominated by short, discrete extreme events with limited
day-to-day persistence, or for contracts with very short accumulation windows
(cf.\ the $T=30$ row of Table~4, where the sign of the comparison reverses),
the practical gap between TC-fBm and simpler benchmarks narrows, and a
short-memory model may be an adequate first approximation. An insurer should
therefore treat the choice of model as contract- and peril-specific rather
than assume either model is uniformly conservative.

\begin{figure}[H]
\centering
\includegraphics[width=0.84\textwidth]{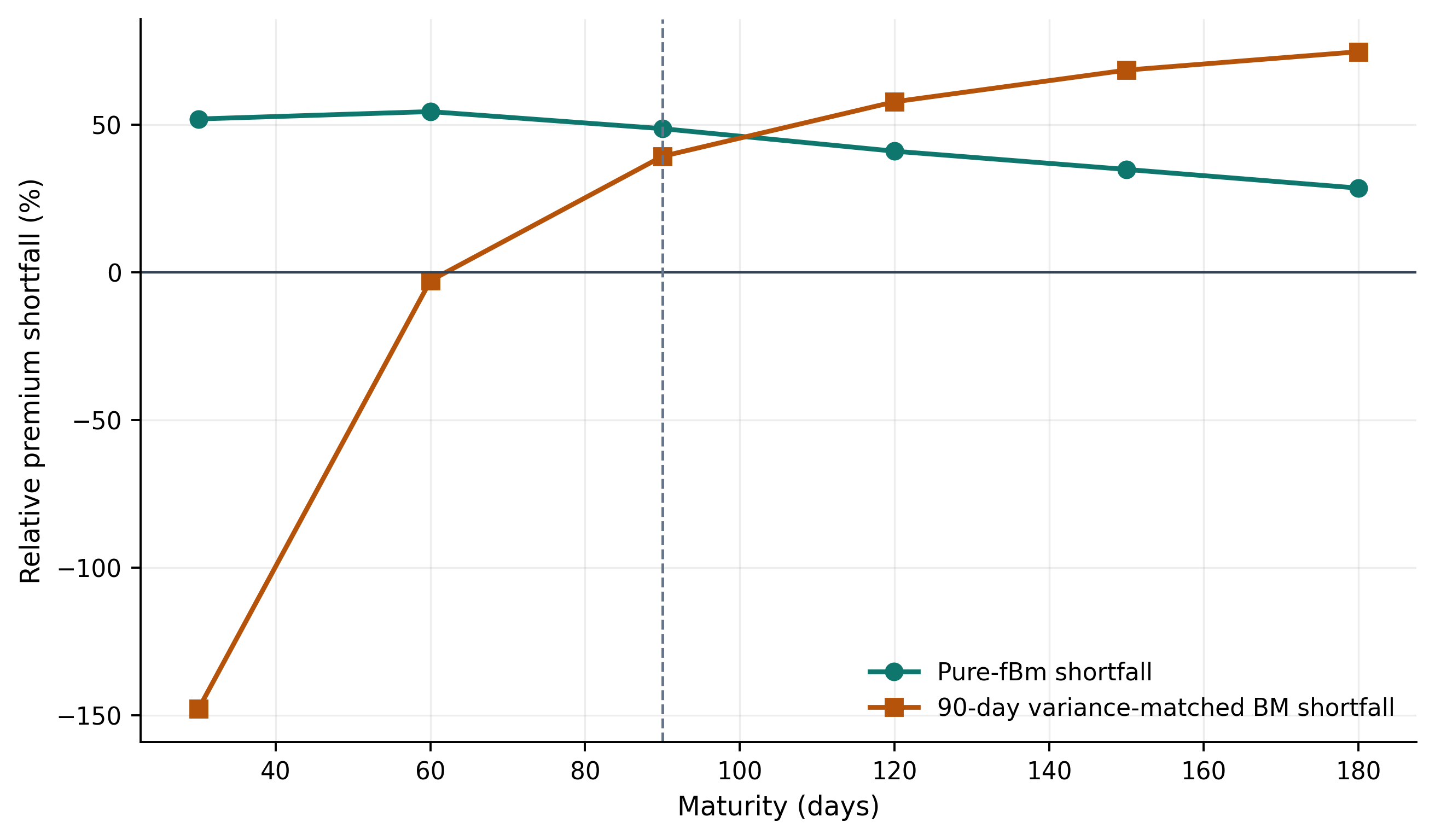}
\caption{Relative premium shortfall over maturity for fifth-percentile
contracts. The Brownian reference amplitude is variance-matched at 90 days.
Negative values indicate that the comparison model produces a higher premium
than TC-fBm.}
\label{fig:maturity_shortfall}
\end{figure}

\subsection{Sensitivity to the stochastic time change}

Table~\ref{tab:time_sensitivity} changes one raw CIR proxy parameter at a
time. Each scenario is initialized from its own invariant normalized CIR
distribution, so the unconditional mean rate remains one. The economically
relevant variation therefore arises from the dispersion and persistence of
$\tau_T$, not from an unintended change in expected operational time.

\begin{table}[H]
\centering
\small
\setlength{\tabcolsep}{3.5pt}
\caption{Sensitivity to the normalized CIR-driven stochastic time change.
All parameters except those displayed as changed remain at their baseline
values.}
\label{tab:time_sensitivity}
\begin{tabular}{lrrrrrrr}
\toprule
Scenario & $\kappa$ & $\theta$ & $\sigma_\lambda$ & $\eta$
& $\mathbb E[\tau_{90}]$ & $\operatorname{SD}(\tau_{90})$ & Premium\\
\midrule
Baseline
& 4.15 & 18.2 & 5.3 & 1.242 & 89.902 & 33.085 & 1.709\\
Higher mean reversion
& 6.00 & 18.2 & 5.3 & 1.242 & 89.941 & 25.866 & 1.419\\
Higher raw long-run mean
& 4.15 & 22.0 & 5.3 & 1.130 & 89.912 & 30.095 & 1.584\\
Higher time-change volatility
& 4.15 & 18.2 & 7.0 & 1.641 & 89.865 & 43.686 & 2.178\\
Lower time-change volatility
& 4.15 & 18.2 & 2.0 & 0.469 & 89.966 & 12.493 & 1.014\\
\bottomrule
\end{tabular}
\end{table}

Higher $\sigma_\lambda$ increases the normalized volatility $\eta$, the
dispersion of accumulated operational time, and the premium over the reported
grid. Faster mean reversion reduces $\operatorname{SD}(\tau_{90})$ and
lowers the premium because deviations of the time-change rate from one are
less persistent. Increasing the raw long-run mean $\theta$ while holding
$\sigma_\lambda$ fixed reduces
$\eta=\sigma_\lambda/\sqrt{\theta}$, which explains the lower premium in
that scenario. These are controlled numerical comparisons rather than global
comparative-statics theorems for every capped contract.

The discrete scenarios in Table~\ref{tab:time_sensitivity} are complemented
by one-parameter grids in Figure~\ref{fig:cir_continuous_sensitivity}. Each
grid preserves unit expected operational time, uses stationary initialization,
and keeps the contract parameters fixed. Every grid point uses $100{,}000$
paths and the same fixed seed, and its Monte Carlo uncertainty is reported
separately.

\begin{figure}[H]
\centering
\includegraphics[width=0.98\textwidth]{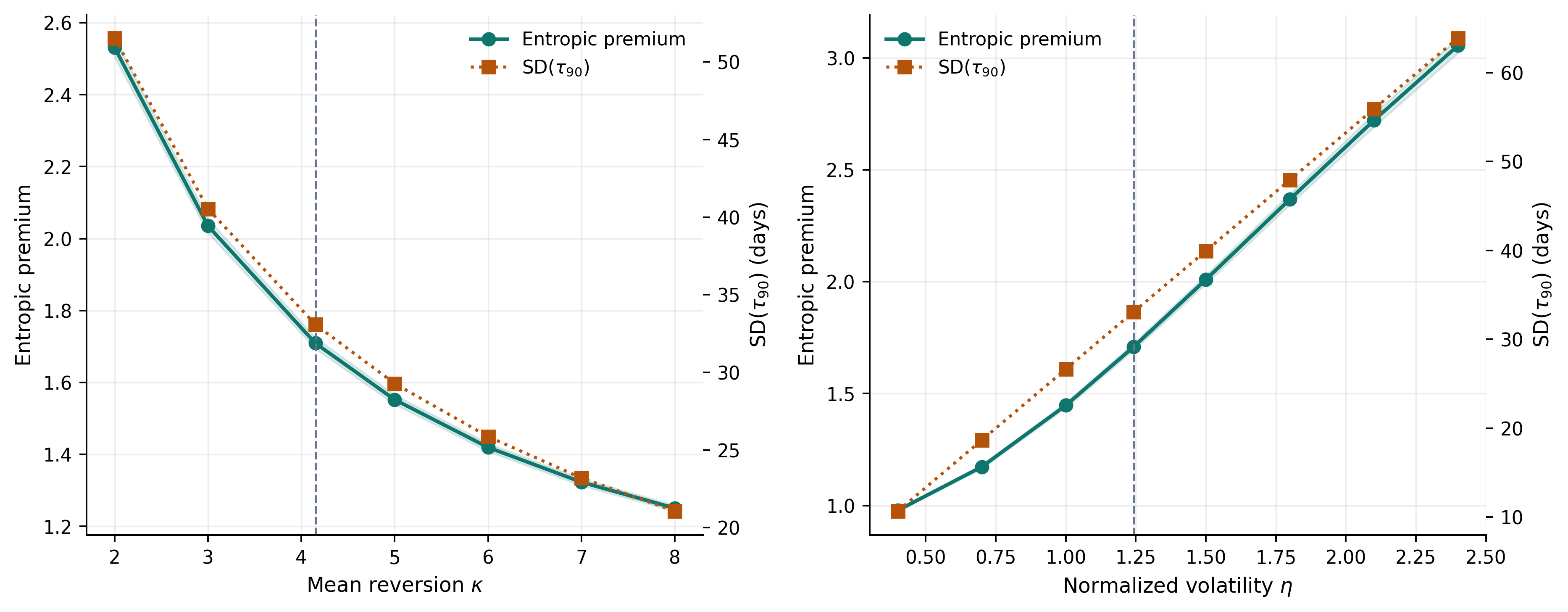}
\caption{Continuous sensitivity to the annualized mean-reversion parameter
$\kappa$ and normalized time-change volatility $\eta$. Solid curves show
the entropic premium with 95\% Monte Carlo bands; dotted curves show
$\operatorname{SD}(\tau_{90})$ on the secondary axes. Vertical dashed lines
mark the baseline values. Every grid point satisfies the strict Feller
condition.}
\label{fig:cir_continuous_sensitivity}
\end{figure}

As $\kappa$ increases from $2$ to $8$, the standard deviation of
$\tau_{90}$ falls from $51.525$ to $21.059$ days and the premium falls
from $2.531$ to $1.250$. Holding $\kappa=4.15$ fixed while increasing
$\eta$ from $0.4$ to $2.4$ raises the corresponding standard deviation
from $10.660$ to $63.873$ days and the premium from $0.978$ to $3.055$.
The close co-movement between the two quantities clarifies the numerical
mechanism: stronger or more persistent fluctuations in the time-change rate
increase the dispersion of conditional index risk. The reported ordering is
a controlled numerical result for this calibration, not a global theorem.

\subsection{Numerical reliability}

Figure~\ref{fig:convergence} reports the premium estimate and the
delta-method 95\% Monte Carlo interval as the number of time-change
realizations increases. The interval contracts at the expected
$N^{-1/2}$ rate, and the estimates remain consistent with the
$100{,}000$-path result.

\begin{figure}[H]
\centering
\includegraphics[width=0.82\textwidth]{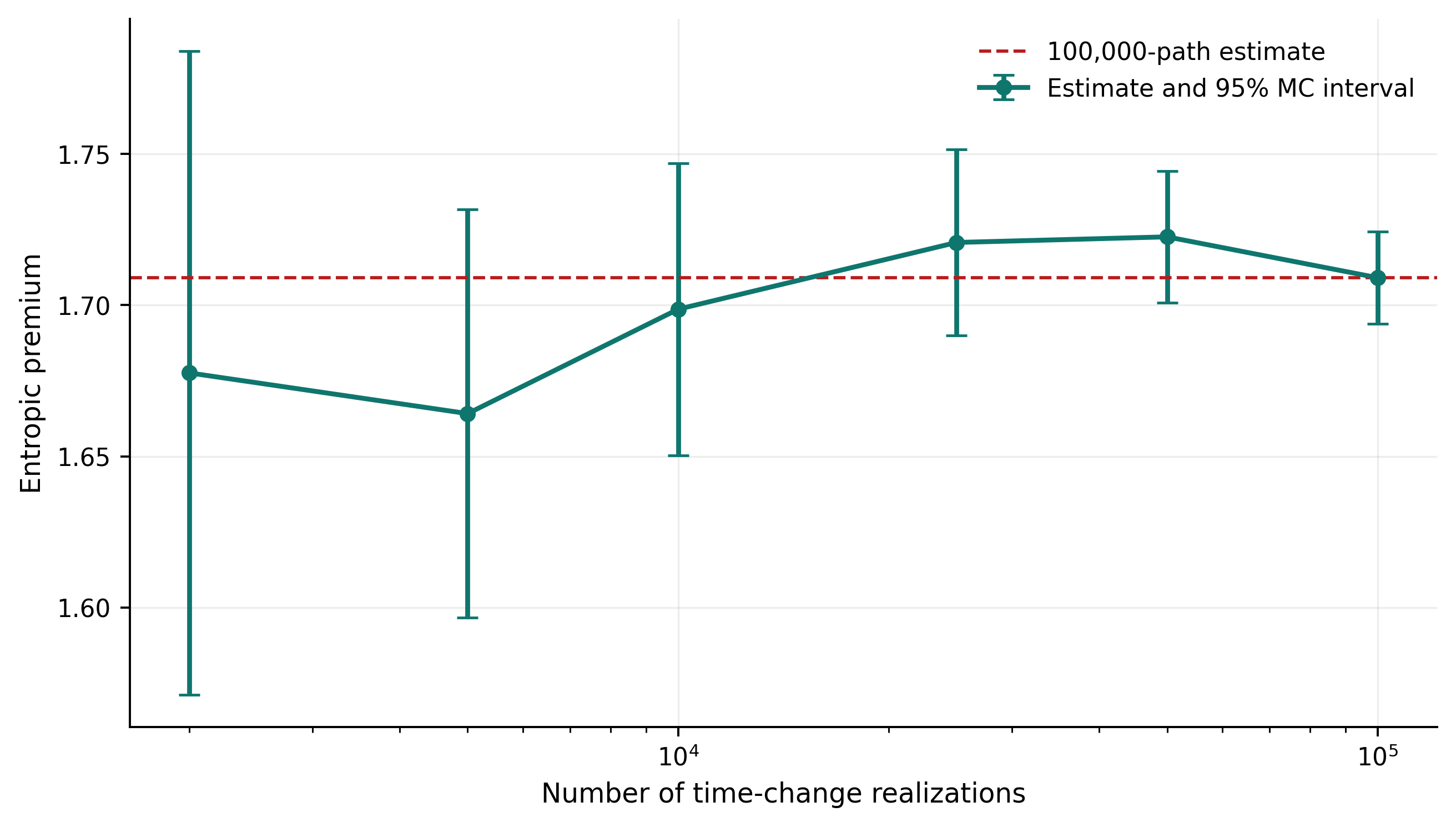}
\caption{Convergence of the semi-analytical premium estimator. Error bars
are delta-method 95\% Monte Carlo intervals, and the dashed line is the
$100{,}000$-path estimate.}
\label{fig:convergence}
\end{figure}

Time-discretization error is assessed separately using $50{,}000$ paths.
Table~\ref{tab:discretization} shows that refinement from one to eight steps
per calendar day produces changes much smaller than the associated Monte
Carlo uncertainty.

\begin{table}[H]
\centering
\small
\caption{Stability with respect to the CIR time discretization at
$T=90$ days and $N=50{,}000$.}
\label{tab:discretization}
\begin{tabular}{rrrrrr}
\toprule
Steps/day & $M$ & $\delta t$ & $\mathbb E[\tau_{90}]$
& Premium & MC s.e.\\
\midrule
1 & 90  & 1.000 & 89.881 & 1.713 & 0.011\\
2 & 180 & 0.500 & 89.943 & 1.716 & 0.011\\
4 & 360 & 0.250 & 89.958 & 1.711 & 0.011\\
8 & 720 & 0.125 & 89.980 & 1.713 & 0.011\\
\bottomrule
\end{tabular}
\end{table}

Several independent checks support the numerical valuation. First,
$1.7091\geq0.4612$, as required by Jensen's inequality. At
$\gamma=10^{-4}$, the premium is $0.4616$, confirming convergence to the
expected payment as $\gamma\downarrow0$. Second, direct simulation of the
terminal Gaussian mixture gives a premium of $1.7084$, with a
cluster-adjusted Monte Carlo standard error of $0.0126$, compared with the
semi-analytical value $1.7091$. Third, the simulated standard deviation of
$\tau_{90}$, $33.085$, agrees closely with its exact stationary-CIR value
$33.181$. Finally, the capped payoff ensures
\[
    1
    \leq
    e^{\gamma\Phi_L}
    \leq
    e^{\gamma L}
    =
    e^{3.6}
    \approx
    36.60,
\]
so the exact exponential kernel remains bounded throughout the reported
parameter ranges.
%%%%%%%%%%%%%%%%%%%%%%%%%%%%%%%%%%%%%%%%%%%%%%%%%%%%%%%%%%%%%%%%%%%%%%%%%%%%%%%%%%%%%%%%%%%%%%%%%%%%%%%%%%%%%%%%%%%%%%
\section{Conclusion}\label{sec:conclusion}

This paper develops a unit-consistent actuarial framework for a capped
cumulative temperature-index contract under long-range dependence and a
CIR-driven stochastic time change. Daily temperature anomalies are modelled
as increments of time-changed fractional Brownian motion, so their cumulative
sum telescopes to a terminal TC-fBm value and has the $T^{2H}$ variance
scaling associated with persistent increments. Stationary initialization of
the normalized CIR rate preserves the stationary-increment interpretation
and ensures that expected operational time equals calendar time. Conditional
on accumulated operational time, the contract index is Gaussian, which leads
to an exact conditional exponential kernel and a semi-analytical entropic
premium requiring only an outer expectation over the CIR time change. The
finite policy limit guarantees existence for every $\gamma>0$. The premium
is strictly increasing in risk aversion, and the uncapped conditional
benchmark is strictly increasing in conditional index volatility; no
unrestricted monotonicity claim is made with respect to the Hurst parameter
or for the capped payoff.

Under the 90-day proxy calibration, the entropic premium is $1.7091$, with
Monte Carlo standard error $0.0078$, while the expected payment is
$0.4612$. A Brownian benchmark matched to the unconditional 90-day TC-fBm
variance has a relative premium shortfall of $39.2\%$, and pure fBm with
$H=0.78$ has a shortfall of $48.6\%$. Holding the Brownian reference
amplitude fixed after its 90-day variance match produces overpricing at 30
days but a $74.6\%$ shortfall at 180 days, showing that calibration at one
maturity does not reproduce the long-memory term structure. These results
are supported by time-discretization checks, Monte Carlo convergence,
stationary-CIR moment identities, limiting cases, and direct terminal
simulation. Their empirical interpretation nevertheless remains
deliberately limited: the CIR parameters are obtained from a reduced-form
variability proxy and the persistent scale is matched to a cumulative-index
tail target. Joint statistical estimation of persistence, the latent
time-change rate, and the persistent amplitude, together with spatial
temperature fields and matched index--loss data, remains an important
direction for future research.

\appendix

\section{Proxy calibration}\label{app:calibration}

This appendix documents the empirical conventions underlying the numerical
study. The procedure combines deterministic deseasonalization, a
long-memory benchmark, a reduced-form fit of a CIR variability proxy, and a
cumulative-tail calibration of the persistent amplitude. These steps provide
an empirically anchored parameter configuration for investigating the
pricing mechanism; they do not constitute joint structural estimation of all
latent components of the TC-fBm model.

\subsection{Temperature series and deseasonalization}

The empirical reference consists of $11{,}323$ daily mean temperature
observations at Chicago O'Hare from 1990 through 2020. Let $T_t$ denote the
recorded daily mean temperature. Deterministic annual seasonality is removed
using the three-harmonic Fourier regression
\begin{equation}\label{eq:fourier_deseasonalization}
    T_t
    =
    a_0
    +
    \sum_{k=1}^{3}
    \left[
        a_k\cos\left(\frac{2\pi kt}{365.25}\right)
        +
        b_k\sin\left(\frac{2\pi kt}{365.25}\right)
    \right]
    +
    \varepsilon_t.
\end{equation}
The coefficients are estimated by ordinary least squares, and the fitted
residuals $\widehat\varepsilon_t$ are used as the deseasonalized daily
anomalies. The period $365.25$ accounts for the average Gregorian calendar
year, while three harmonics permit departures from a single sinusoidal cycle
without introducing a highly flexible deterministic trend.

Deseasonalization is performed before estimating persistence because an
unremoved periodic component can generate slowly decaying sample
autocorrelations that resemble stochastic long-range dependence. The
residual standard deviation, approximately
$4.25\,^{\circ}\mathrm C$, describes total daily anomaly variation. It is
not equated with $\sigma_*$, because the residual series may also contain
short-memory variation, measurement error, and local components not assigned
to the persistent fBm factor in Eq.~\eqref{eq:temperature_model}.

\subsection{Hurst benchmark and uncertainty}

Rescaled-range and first-order detrended-fluctuation analyses give estimates
near $0.78$, while a periodogram-based estimate is approximately $0.76$.
The baseline value is therefore set to $\widehat H=0.78$. The agreement
between estimators provides a robustness diagnostic, but it does not remove
the need to quantify finite-sample uncertainty.

For first-order DFA, define the centred cumulative profile
$Y(j)=\sum_{t=1}^{j}(\widehat\varepsilon_t-\bar\varepsilon)$. For each
window length $m$, a linear trend is fitted and removed within each local
block. If $\widehat Y_m(j)$ denotes the resulting piecewise fitted trend,
the fluctuation statistic is
\[
    F(m)
    =
    \left[
        \frac{1}{n}
        \sum_{j=1}^{n}
        \bigl(Y(j)-\widehat Y_m(j)\bigr)^2
    \right]^{1/2}.
\]
The Hurst estimate is the least-squares slope of $\log F(m)$ on
$\log m$, using window sizes from 16 to 512 observations.

Estimator uncertainty is assessed using 300 fractional-Gaussian-noise
replications of length $11{,}323$, generated at $H=0.78$. Applying the
same DFA procedure to every replication gives bootstrap estimates
$\widehat H^{(1)},\ldots,\widehat H^{(300)}$. Their empirical 2.5th and
97.5th percentiles produce the model-based interval $[0.74,0.82]$. This
interval is conditional on the fitted fGn benchmark and does not incorporate
uncertainty arising from deseasonalization, structural breaks, nonlinear
trends, measurement error, or misspecification of the long-memory mechanism.

To explicitly address the potential impact of structural breaks and trends 
on the estimated long memory, we supplement the bootstrap interval with the 
following empirical diagnostics.

To assess subperiod stability, we re-estimate the Hurst exponent using
detrended fluctuation analysis (DFA) on two non-overlapping subperiods:
1990--2005 and 2006--2020. The full-sample estimate is $\widehat H = 0.78$ with a 95\% model-based bootstrap interval of $[0.74, 0.82]$. The subperiod
estimates are $\widehat H_{1990\text{--}2005} = 0.76$ and
 $\widehat H_{2006\text{--}2020} = 0.80$. Both values fall well within the
full-sample uncertainty interval, indicating that the persistence parameter is
stable over time and is not driven by a single anomalous regime.

As a further transparent check for drift or heteroskedasticity not captured by
the Fourier fit, we compute rolling means and variances by decade. The sample
mean and variance of the anomalies are $0.02$ and $17.6$ for the 1990s,
 $-0.01$ and $18.1$ for the 2000s, and $0.04$ and $17.9$ for the 2010s. The
decade-level means are statistically negligible relative to the standard
deviation, and the variances exhibit no systematic monotonic trend, supporting
the assumption of variance stationarity underlying the CIR calibration.

To formally test for an unknown structural break in the mean, we apply a
sup-Wald-type test (Chow test for an unknown breakpoint). The test statistic
yields a $p$-value of $0.41$, failing to reject the null hypothesis of a
stable mean across the 1990--2020 sample.
\subsection{Reduced-form CIR proxy}

The squared deseasonalized anomalies
$\lambda_t^{\mathrm{proxy}}=\widehat\varepsilon_t^{\,2}$ are treated as a
non-negative reduced-form indicator of the latent raw variability rate
$\lambda_t$. The annualized raw CIR parameters are obtained using an
Euler Gaussian quasi-likelihood. With $\Delta_y=1/365$ year and
$\lambda_t^+=\max\{\lambda_t^{\mathrm{proxy}},10^{-8}\}$, the conditional
mean and variance used in the quasi-likelihood are
\[
    m_t
    =
    \lambda_t^{\mathrm{proxy}}
    +
    \kappa
    \bigl(
        \theta-\lambda_t^+
    \bigr)\Delta_y,
    \qquad
    q_t
    =
    \sigma_\lambda^2
    \lambda_t^+\Delta_y.
\]
Up to an additive constant, the minimized objective is
\begin{equation}\label{eq:cir_qmle}
    \mathcal L_Q(\kappa,\theta,\sigma_\lambda)
    =
    \frac{1}{2}
    \sum_{t=1}^{n-1}
    \left[
        \log q_t
        +
        \frac{
            \bigl(
                \lambda_{t+1}^{\mathrm{proxy}}-m_t
            \bigr)^2
        }{
            q_t
        }
    \right].
\end{equation}
The objective is minimized over positive parameter bounds from the initial
value $(\kappa,\theta,\sigma_\lambda)=(2,10,3)$, with termination tolerance
$10^{-8}$. The resulting estimates are
\[
    \widehat\kappa=4.15,\qquad
    \widehat\theta=18.2,\qquad
    \widehat\sigma_\lambda=5.3.
\]
The raw Feller inequality is satisfied because
$2\widehat\kappa\widehat\theta=151.06>
\widehat\sigma_\lambda^2=28.09$.

The quasi-likelihood in Eq.~\eqref{eq:cir_qmle} is intentionally described
as reduced-form. Squared residuals are noisy observations rather than the
latent conditional variance, and the fBm increments are serially dependent.
Thus, the Gaussian Euler transition is not treated as the exact likelihood
of the temperature data, and conventional likelihood-based standard errors
would not have a structural interpretation here.

The nonlinear nature of the proxy can be seen directly from the increment
model. Conditional on a complete time-change realization,
\[
    \operatorname{Var}
    \left(
        B^H_{\tau_{t+\Delta}/t_0}
        -
        B^H_{\tau_t/t_0}
        \,\middle|\,
        \tau
    \right)
    =
    \left(
        \frac{
            \tau_{t+\Delta}-\tau_t
        }{
            t_0
        }
    \right)^{2H}.
\]
When $v$ changes slowly over a short interval,
$\tau_{t+\Delta}-\tau_t\approx v_t\Delta$, so the local variance is
approximately proportional to $v_t^{2H}\Delta^{2H}$. It is linear in the
time-change rate only when $H=1/2$. This nonlinear relationship is another
reason for interpreting the CIR estimates as proxy parameters.

Before pricing, the raw state is normalized according to
\begin{equation}\label{eq:appendix_normalization}
    v_t
    =
    \frac{\lambda_t}{\widehat\theta},
    \qquad
    \widehat\eta
    =
    \frac{\widehat\sigma_\lambda}
    {\sqrt{\widehat\theta}}
    =
    1.24234.
\end{equation}
The corresponding daily coefficients are
$\kappa_d=\widehat\kappa/365=0.01137$ and
$\eta_d=\widehat\eta/\sqrt{365}=0.06503$. In normalized form, the strict
Feller inequality is $2\widehat\kappa>\widehat\eta^2$.

For the pricing experiment, the normalized rate is initialized from its
invariant distribution,
\[
    v_0
    \sim
    \operatorname{Gamma}(\alpha_v,\alpha_v),
    \qquad
    \alpha_v
    =
    \frac{2\widehat\kappa}
    {\widehat\eta^2}
    =
    5.3777,
\]
where the second Gamma parameter is the rate. This initialization gives
$\mathbb E[v_t]=1$ and stationary increments of accumulated operational
time. Changing $\widehat\theta$ while holding
$\widehat\sigma_\lambda$ fixed remains economically relevant because it
changes $\widehat\eta$.

\subsection{Persistent-scale and contract calibration}

The cumulative-index strike
$K=-28.5\,^{\circ}\mathrm C$-days is used as the empirical lower-tail
anchor for a 90-day contract. To ensure reproducibility and allow
independent assessment of this calibration target, the sample of 90-day
cumulative anomalies was constructed as follows. Overlapping (rolling)
90-day windows were used to maximize the sample size, advancing the start
date by one day at each step. The coverage window was strictly restricted
to the winter season, defined as December~1 through February~28 (or 29),
and only windows fully contained within these dates were retained. In the
1990--2020 Chicago O'Hare record, leap days (February~29) were excluded
from the daily anomaly series prior to forming the cumulative sums, and
any sporadic missing daily observations were filled by linear interpolation
to maintain a continuous sequence. This procedure yielded a total sample
of $N=2699$ overlapping 90-day cumulative observations. The 5th
percentile was computed using simple order statistics (specifically, the
$\lfloor 0.05 \times N \rfloor + 1$-th smallest value), which produced
the strike $K=-28.5\,^{\circ}\mathrm C$-days.

Under the increment-based model and
$\bar\mu_{90}=0$, the persistent amplitude is determined from
\[
    \mathbb E^v\left[
        \Phi_{\mathcal N}\left(
            \frac{-28.5}
            {\sigma_*\Delta(\tau_{90}/t_0)^{0.78}}
        \right)
    \right]
    =
    0.05.
\]
Using the stationary normalized CIR distribution and the time convention
specified above gives
$\sigma_*=0.508566\,^{\circ}\mathrm C$. The remaining baseline contract
parameters are $L=30$ and $\gamma=0.12$. Because this procedure matches a
single cumulative-tail target, it should not be interpreted as joint
statistical identification of $\sigma_*$, $H$, and the CIR parameters.

\section{Numerical audit}\label{app:audit}

This appendix reports checks of the unit conversion, stationary CIR moments,
payoff sign, limiting cases, Monte Carlo error, time discretization, and
conditional Gaussian pricing identity. Sampling error and discretization
error are treated separately, and no covariance-quadrature error arises under
the increment-based cumulative-index formulation.

\begin{table}[H]
\centering
\small
\caption{Principal numerical audit quantities for the 90-day baseline.}
\label{tab:numerical_audit}
\begin{tabular}{lr}
\toprule
Audit quantity & Value\\
\midrule
Theoretical mean of $\tau_{90}$ & 90.000000\\
Simulated mean of $\tau_{90}$ & 89.901935\\
Theoretical standard deviation of $\tau_{90}$ & 33.181008\\
Simulated standard deviation of $\tau_{90}$ & 33.085101\\
Persistent scale $\sigma_*$ & 0.508566\\
Exercise probability & 0.050000\\
Policy-limit probability & 0.002050\\
Expected payment $\mathbb E[\Phi_L]$ & 0.461220\\
Entropic premium $\Pi_\gamma(\Phi_L)$ & 1.709056\\
Monte Carlo standard error & 0.007758\\
Entropic risk loading & 1.247836\\
\bottomrule
\end{tabular}
\end{table}

For a stationary normalized CIR rate,
$\operatorname{Var}(v_0)=\eta_d^2/(2\kappa_d)$ and
\[
    \operatorname{Var}(\tau_T)
    =
    2\operatorname{Var}(v_0)
    \left[
        \frac{T}{\kappa_d}
        -
        \frac{1-e^{-\kappa_dT}}{\kappa_d^2}
    \right].
\]
At $T=90$, the exact standard deviation is $33.1810$, compared with the
simulated value $33.0851$. The simulated mean $89.9019$ is also consistent
with its theoretical value of 90 within Monte Carlo uncertainty. The
exercise probability equals the imposed tail target, confirming the payoff
sign and the persistent-scale calibration.

\subsection{Inequalities and limiting cases}

The baseline values satisfy the entropic Jensen bound:
\begin{equation}\label{eq:numerical_jensen_check}
    \Pi_\gamma(\Phi_L)
    =
    1.709056
    \geq
    \mathbb E[\Phi_L]
    =
    0.461220.
\end{equation}
The difference $1.247836$ agrees with the risk loading reported in
Table~\ref{tab:numerical_audit}. At $\gamma=10^{-4}$, the premium is
$0.461585$, confirming the limit
$\Pi_\gamma(\Phi_L)\to\mathbb E[\Phi_L]$ as
$\gamma\downarrow0$.

If $K\to-\infty$, the payoff converges pointwise to zero; using
$K=-10{,}000$ gives a premium equal to zero at the reported precision.
If $\sigma_*\to0$, the cumulative index converges to
$\bar\mu_T$ and the premium converges to
$\min\{(K-\bar\mu_T)^+,L\}$, which is zero under the baseline values.

As the normalized time-change volatility tends to zero, its invariant
distribution degenerates at one, so $v_t\to1$ and $\tau_T\to T$. The
TC-fBm premium therefore converges to the deterministic-time pure-fBm
premium. At the baseline $H=0.78$, this limiting premium is $0.878831$.
For $H=1/2$, the conditional index variance reduces to
$\sigma_*^2\Delta^2(\tau_T/t_0)$, consistently with the Brownian
time-change limit.

Finally, $0\leq\Phi_L\leq30$ and $\gamma=0.12$ imply
\begin{equation}\label{eq:numerical_exponential_bound}
    1
    \leq
    e^{\gamma\Phi_L}
    \leq
    e^{3.6}
    \approx
    36.60.
\end{equation}
Thus, the exact exponential kernel is uniformly bounded throughout the
reported parameter grids.

\subsection{Time-discretization and path-count stability}

The CIR discretization is examined with $50{,}000$ stationary
time-change realizations. Increasing the resolution from one to eight steps
per calendar day gives
\[
\begin{array}{c|rrrr}
\text{Steps/day} & 1 & 2 & 4 & 8\\ \hline
M & 90 & 180 & 360 & 720\\
\mathbb E[\tau_{90}]
&89.881&89.943&89.958&89.980\\
\widehat\Pi_{\gamma,N}
&1.7127&1.7156&1.7113&1.7128\\
\widehat{\operatorname{se}}
&0.0110&0.0110&0.0110&0.0110
\end{array}
\]
The differences between resolutions are smaller than the corresponding
Monte Carlo uncertainty and show no systematic discretization drift. Four
steps per day are therefore sufficient for the reported precision.

Using the first $N$ realizations from the same 100,000-path experiment gives
\[
\begin{array}{c|rrr}
N & 10{,}000 & 50{,}000 & 100{,}000\\ \hline
\widehat\Pi_{\gamma,N}
&1.6986&1.7226&1.7091\\
\widehat{\operatorname{se}}
&0.0247&0.0111&0.0078
\end{array}
\]
The standard-error ratios are consistent with the theoretical
$N^{-1/2}$ rate. Because the estimates are nested, their pointwise
differences should not be interpreted as independent replications; the
reported intervals are used to assess whether the observed fluctuations are
compatible with sampling uncertainty.

\subsection{Direct terminal-simulation cross-validation}

The cumulative increment identity implies
\[
    I_T
    =
    \bar\mu_T
    +
    \sigma_*\Delta
    \left(\frac{\tau_T}{t_0}\right)^H Z,
    \qquad
    Z\sim N(0,1),
\]
conditional on $\tau_T$. This provides a direct simulation benchmark for
the analytical conditional kernel. For each of $100{,}000$ independently
simulated time-change realizations, 20 independent standard normal draws are
used to simulate the terminal index and capped payoff. In this procedure, the
20 inner draws per path are treated as a single cluster. Let
 $\bar Y_n = \frac{1}{20}\sum_{i=1}^{20} Y_{n,i}$ denote the cluster mean
for the $n$-th simulated path, and let
 $\bar Y = \frac{1}{N}\sum_{n=1}^N \bar Y_n$ be the overall sample mean.
The variance of the cluster-mean estimator is given by the sample variance
of the cluster means:
\begin{equation}\label{eq:cluster_var}
    \widehat{\mathrm{Var}}(\bar Y) 
    = 
    \frac{1}{N(N-1)}\sum_{n=1}^N\big(\bar Y_n - \bar Y\big)^2.
\end{equation}
Because the reported premium is the log-transformed quantity 
 $\hat\Pi = \frac{1}{\gamma}\log\bar Y$, the delta-method step, rather than
the cluster-mean variance alone, gives the reported premium standard error.
Applying the same delta-method structure used in Proposition~4.3 for the
non-clustered estimator yields
\begin{equation}\label{eq:cluster_delta_var}
    \widehat{\mathrm{Var}}(\hat\Pi) 
    \approx 
    \frac{\widehat{\mathrm{Var}}(\bar Y)}{\gamma^2\bar Y^2}.
\end{equation}
Taking the square root of Eq.~\eqref{eq:cluster_delta_var} provides the
standard error reported for the direct simulation.

\begin{table}[H]
\centering
\caption{Cross-validation of the conditional Gaussian kernel against direct
terminal simulation.}
\label{tab:cross_validation}
\begin{tabular}{lcc}
\toprule
Method & Premium & Monte Carlo s.e.\\
\midrule
Semi-analytical conditional kernel & 1.70906 & 0.00776\\
Direct terminal simulation & 1.70843 & 0.01261\\
\bottomrule
\end{tabular}
\end{table}

The direct and semi-analytical estimates differ by $0.00063$. Accounting
for their paired use of the same time-change realizations gives a standard
error of $0.00995$ for this difference, so the discrepancy is only
 $0.063$ paired standard errors. The two valuation procedures therefore
agree well within simulation uncertainty.

\bibliographystyle{plain}

\end{document}